\theoremstyle{definition}
\newtheorem{definition}{Definition}[section]
\newtheorem{theorem}{Theorem}[section]
\newtheorem{lemma}[theorem]{Lemma}
\newcommand\bind{\gg\kern-.5ex=\kern-.3ex}
\newcommand{\tighten}[1]{{\,{#1}\,}}
\newcommand\hastype{\tighten{:}}
\newcommand\nat{\mathbb{N}}
\newcommand\bool{\mathbb{B}}
\newcommand\unitt{\mathds{1}}
\newcommand\unitv{\mathtt{tt}}
\newcommand{\option}[1]{{\mathsf{option}\;{#1}}}
\newcommand{\Some}[1]{{\mathsf{Some}\;{#1}}}
\newcommand\None{\mathsf{None}}
\newcommand\funarrow{\tighten{\rightarrow}}
\DeclarePairedDelimiter\ceil{\lceil}{\rceil}
\newcommand{\pure}[2]{{\ceil{#1}_{#2}}}
\newcommand{\communicate}[4]{\texttt{comm}\,\,#1\,\,#2\,\,#3\,\,#4}
\newcommand{\letexp}[3]{\texttt{let}\,\,#1\,\coloneqq\,#2\,\texttt{in}\,\,#3}
\newcommand\doublebackslash{\tighten{\backslash\kern-1.5ex\backslash\kern0.4ex}}
\newcommand{\Pseudocomment}[1]{{\small\textit{#1}}}
\newcommand{\iter}[2]{{#1}^{#2}}
\newcommand{\vectortp}[2]{\mathsf{Vec}\,{#1}\,{#2}}
\newcommand{\texp}[4]{{#1} \vdash_{#2} {#3} \hastype #4}
\newcommand{\tprog}[5]{{#1};{#2} \vdash_{#3} {#4} \hastype {#5}}
\newcommand{\linen}[1]{\color{gray}{#1}\ }
\newcommand{\myifthenelse}[3]{\mathsf{if}\;{#1}\;\mathsf{then}\;{#2}\;\mathsf{else}\;{#3}}
\newcommand\doubleplus{+\kern-1.3ex+\kern0.8ex}
\newcommand\cons{::}
\newcommand{\denote}[1]{{\llbracket #1 \rrbracket}}
\newcommand{\compile}[2]{{\llparenthesis \,{#1}\, \rrparenthesis_{#2}}}
\newcommand{\powerset}[1]{{\mathcal{P}({#1})}}
\newcommand{\netsim}{\mathsf{Netwk}}
\newcommand{\Dom}[1]{\mathsf{Dom}(#1)}
\newcommand{\alignl}[2]{{\mathsf{align}_{#1}({#2})}}
\newcommand{\projl}[2]{{\mathsf{proj}_{#1}({#2})}}
\newcommand{\filterl}[2]{{\mathsf{filter}_{#1}(#2)}}
\newcommand{\mapl}[2]{{\mathsf{map}\;{#1}\;{#2}}}
\newcommand{\foldl}[3]{{\mathsf{foldl}\;{#1}\;{#2}\;{#3}}}
\newcommand{\flatmapl}[2]{{\mathsf{flatmap}\;{#1}\;{#2}}}
\newcommand{\orderrelated}{{\sim}}
\newcommand{\extract}[1]{{#1}{\downarrow}}
\newcommand\HLL{Sync\xspace}
\newcommand\LLL{Async\xspace}
\newcommand{\Lrole}{L}
\newcommand{\Rrole}{R}
\begin{document}

    \title{Functional Reasoning for Distributed Systems with Failures}

    \author{Haobin Ni}
    \affiliation{\institution{Cornell University} \city{Ithaca, NY} \country{USA}}

    \author{Robbert van Renesse}
    \affiliation{\institution{Cornell University} \city{Ithaca, NY} \country{USA}}

    \author{Greg Morrisett}
    \affiliation{\institution{Cornell University} \city{Ithaca, NY} \country{USA}}

    \begin{abstract}

    Distributed system theory literature often argues for correctness using an informal, Hoare-like style of reasoning.
    While these arguments are intuitive, they have not all been foolproof, and whether they directly correspond to formal proofs is in question.
    
    We formally ground this kind of reasoning and connect it to standard formal approaches through language design and meta-analysis, which leads to a functional style of compositional formal reasoning for a class of distributed systems, including cases involving Byzantine faults.
    
    The core of our approach is twin languages: \HLL and \LLL, which formalize the insight from distributed system theory that an asynchronous system can be reduced to a synchronous system for more straightforward reasoning under certain conditions. 
    \HLL describes a distributed system as a single, synchronous, data-parallel program.
    It restricts programs syntactically and has a functional denotational semantics suitable for Hoare-style formal reasoning. 
    \LLL models a distributed system as a collection of interacting monadic programs, one for each non-faulty node in the system.
    It has a standard trace-based operational semantics, modeling asynchrony with interleaving.
    \HLL compiles to \LLL and can then be extracted to yield executable code. 
    We prove that any safety property proven for a \HLL program in its denotational semantics is preserved in the operational semantics of its compiled \LLL programs.
    We implement the twin languages in Rocq and verify the safety properties of two fault-tolerant consensus protocols: BOSCO and SeqPaxos.
    
    \end{abstract}
    
    \maketitle
    
    \section{Introduction}

    Distributed system theory often argues for correctness using an informal, Hoare-like style of reasoning ~\cite{lamportParttimeParliament1998, songBoscoOneStepByzantine2008, kotlaZyzzyvaSpeculativeByzantine2010, ongaroSearchUnderstandableConsensus2014, abrahamSyncHotStuffSimple2019, buterinCombiningGHOSTCasper2020}. 
    These arguments are intuitive but are sometimes unsound ~\cite{abrahamRevisitingFastPractical2017, michaelRecoveringSharedObjects2017, momoseForceLockingAttackSync2019, neuEbbandFlowProtocolsResolution2021}.
    Their formal implications are also unclear, as the primary approaches to formally reasoning about distributed systems have been trace-based operational semantics and global invariants to handle asynchrony and faults ~\cite{chandyDistributedSnapshotsDetermining1985, alpernRecognizingSafetyLiveness1987, lamportSpecifyingSystemsTLA2002, wilcoxVerdiFrameworkImplementing2015, hawblitzelIronFleetProvingPractical2015, maI4IncrementalInference2019, qiuLiDODAGFrameworkVerifying2025}.

    In this paper, we formally ground this informal style of reasoning and connect it to the standard approach through language design and meta-analysis.
    In doing so, we establish a functional denotational semantics for distributed systems, enabling a class of asynchronous systems, potentially with Byzantine faults, to be reasoned compositionally following the syntax of their programs, as functions with non-deterministic outputs.

    We propose twin languages, \HLL and \LLL, inspired by distributed system theory ~\cite{charron-bostHeardOfModelComputing2009} that an asynchronous system can sometimes be reduced to a synchronous system for reasoning. 
    \HLL enables the aforementioned style of reasoning for a class of distributed systems, as if they were synchronous and data-parallel. 
    \LLL and the relationship between the twin languages ensure that any safety property proven with \HLL's functional semantics remains true when the system executes asynchronously as described by \LLL's standard trace semantics. 

    \begin{figure}[t]
        \centering
        \hspace*{-.25in}
        \begin{subfigure}[t]{0.5\textwidth}
            \centering
            \begin{tabular}{l}
                $\mathtt{SimpleVote}\ (p \hastype \bool@\Lrole, x \hastype \bool@\Rrole)$
                \\ $ \quad \quad \hastype (\texttt{option}\;\bool)@\Lrole \coloneqq$  \\
                $\quad \textsf{let}\ \mathtt{cnt} \coloneqq \texttt{comm}\ c\ x\ 0\ (\mathtt{fcnteq}\ p)\ \textsf{in}$ \\
                $\quad \texttt{ret}\ (\mathtt{calc\_dec}\ \mathtt{cnt}\ p)$ 
            \end{tabular}
            \vspace{-0.02in}
            \caption{\texttt{SimpleVote} in simplified \HLL syntax}
            \label{fig:example-hll}
        \end{subfigure}
        ~
        \hspace*{-.15in}
        \begin{subfigure}[t]{0.55\textwidth}
            \centering
            \begin{tabular}{l}
                $\mathtt{fcnteq}\ (p \hastype \bool) (\mathtt{cnt} \hastype \nat) (v \hastype \bool) \hastype \nat \coloneqq $
                \\
                $\quad \mathsf{if}\ (p \tighten{==} v)\ \mathsf{then}\ (\mathtt{cnt}+1)\ \mathsf{else}\ \mathtt{cnt}$\\
                $\mathtt{calc\_dec}\ (\mathtt{cnt} \hastype \nat)(p \hastype \bool)\hastype \mathtt{option}\;\bool \coloneqq$ \\
                $\quad \mathsf{if}\ (\mathtt{cnt} \ge n \tighten{-} 2 * f)\ \mathsf{then}\ (\Some{p})\ \mathsf{else}\ \None$
            \end{tabular}
            \vspace{-0.02in}
            \caption{Auxiliary pure function definitions}
            \label{fig:example-fun}            
        \end{subfigure} 
        \Description[]{}
        \vspace{-.15in}
    \end{figure}

    We illustrate the problem and our contribution by defining and reasoning about an example program.
    Figure \ref{fig:example-hll} shows the \texttt{SimpleVote} program in simplified \HLL syntax.
    The \HLL syntax is \textit{choreographic}, describing the whole system with a single program. 
    Two auxiliary functions are defined separately as pure functions in Figure \ref{fig:example-fun}.
    
    \texttt{SimpleVote} aims to describe a distributed system comprising a single leader node and multiple replica nodes that collectively decide on a boolean value based on their local inputs.
    The program has three steps: first, all replica nodes send their local inputs to the leader; second, the leader node waits until it receives the enough messages from replicas and counts how many of the received values are equal to its local input with $\mathtt{fcnteq}$; finally, the leader node computes its local output by checking if the number of equal values reaches a threshold with $\mathtt{calc\_dec}$.

    More concretely, \texttt{SimpleVote} assumes two \textit{roles}, i.e., classes of nodes that execute the same code locally: leader and replica, denoted as $\Lrole$ and $\Rrole$ in the syntax.
    It has two implicit parameters, $n$ and $f$, assuming there is a single non-faulty leader node and $n$ replica nodes, with at most $f$ replica nodes being Byzantine and behaving arbitrarily.
    The first line declares that the program takes in two \emph{distributed} values $p$ and $x$. 
    $p$ is a single Boolean input to the leader, and $x$ contains a Boolean input for each of the replica nodes.
    The second line specifies that the return value is an optional Boolean value at the leader node.
    The following line is a let binding that binds the result of a \emph{communication} action to variable $\mathtt{cnt}$.
    The communication action specifies replica nodes should send their input value $x$ to the leader node, who would wait for at least $n - f$ messages and count the number of messages containing a value equal to $p$ by folding the function $(\mathtt{fcnteq}\ p)$ over the list of received messages with the default value $0$.
    The name $c$ in the line uniquely identifies this particular round of communication.
    The final line specifies that the leader should compute $\mathtt{calc\_dec}\ \mathtt{cnt}\ p$ to produce an output. 
    In particular, if there are at least $n - 2f$ votes containing $p$, the leader should return $\Some{p}$ and $\None$ otherwise.

    A safety property of this program is that when $n > 3f$ and the leader and all correct replicas have the same Boolean value $b$ as their input, the output of the leader must be $\Some{b}$, no matter the actions of the faulty nodes.

    We argue informally in three steps following the syntax of \texttt{SimpleVote}.
    First, because the leader waits for at least $n - f$ messages, among which at most $f$ messages come from faulty nodes, there are at least $n - 2f$ messages received from the correct replica nodes, and all of those contain the value $b$.
    Second, by the definition $\mathtt{fcnteq}$, this implies $\mathtt{cnt}$ is at least $n - 2f$.
    Last, by the definition of $\mathtt{calc\_dec}$, we derive that the leader's output is $\Some{b}$.

    As we can see, this style of reasoning is more straightforward than devising global invariants that would imply the safety property to be proven, which must cover all possibilities of asynchrony.
    However, could this be too good to be true?
    Two questions left to answer are:
    
    \begin{enumerate}
        \item How can we make these arguments formally and precisely?
        \item Do we get the same guarantees as if we had done reasoning using invariants on traces? 
    \end{enumerate}

    \textbf{Our first contribution is to formally ground this style of high-level reasoning.} 
    We address (1) by designing \HLL and its functional denotational semantics, allowing arguments in the above style to be stated precisely as formal proofs.

    \textbf{Our second contribution is to formally connect to established verification paradigms.} 
    We address (2) by proving the adequacy of \HLL's semantics for reasoning in regard to a canonical asynchronous model of distributed systems, \LLL.
    \LLL is a low-level language with a standard trace-based operational semantics that models asynchrony as interleaving, which can be reasoned with standard formal verification techniques.
    More specifically, we prove that for any program $p$ in \HLL, any output produced by the traces derived from compiling $p$ to \LLL, is included in the functional semantics of \HLL.
    This guarantees that any safety properties proven in the \HLL semantics also hold for the compilation result in the \LLL semantics.

    \textbf{Our third contribution is to present a novel framework of formal functional reasoning for the implementation of a class of distributed systems with failures.}
    By combining \HLL and \LLL, one can reason about a distributed system as a composition of non-deterministic functions following its program syntax, while providing strong guarantees about the behavior of the extracted executable code.
    To showcase this, we have implemented the twin languages in the Rocq proof assistant and verified two consensus protocols as case studies: Bosco, a one-step Byzantine-fault tolerant consensus protocol;
    and Sequential Paxos, a crash-fault tolerant consensus protocol that is a variant of Paxos~\cite{lamportParttimeParliament1998}.
    Our more precise reasoning uncovered optimizations that were missed in the original protocols.
    
    For the rest of the paper,
    Section \ref{sec:HLL} presents the syntax and semantics of \HLL, and Section \ref{sec:case_studies} applies it to verify the two consensus protocols.
    Section \ref{sec:operational} clarifies our assumptions about the network behavior and adversarial node and presents the syntax and semantics of \LLL, and Section \ref{sec:proof} proves the relationship between the twin languages.    

    \section{Syntax and High-Level Semantics for \HLL}
    \label{sec:HLL}

    This section introduces \HLL, our high-level language for modeling Byzantine fault-tolerant distributed systems.  
    \HLL is in choreography style \cite{montestiChoreographicProgramming2013, zdancewicSecureProgramPartitioning2002}, which means a \HLL program describes the whole distributed system and can be compiled into programs for individual nodes through \emph{endpoint projection} to be defined later in Section \ref{sec:operational}.
    
    \HLL puts three restrictions on the set of expressible distributed systems.
    First, \HLL does not support branches or loops (except for a meta-level loop over the whole program) due to the problem known as \emph{knowledge of choice} (KoC)~\cite{castagnaGlobalTypesMultiparty2011}, where some nodes may not know which branch to take.
    This becomes sort of a chicken-and-egg situation with failures, as consensus protocols are often used to resolve KoC in a fault-tolerant way in practice.
    Second, \HLL cannot model various cryptographic primitives, so distributed systems that depend on them cannot be expressed.
    How to formally and modularly reason about distributed systems relying on more sophisticated usage of cryptography is an open problem on its own.
    We left supporting specific primitives for future work.
    Last, we enforce a total order on the communication actions in the system.
    As a result, this version of \HLL only supports essentially streamlined programs, whose \LLL semantics satisfy the definition of \emph{communication-closed} programs ~\cite{damianCommunicationClosedAsynchronousProtocols2019}.
    
    At a high level, the denotational semantics of a closed \HLL program is a \emph{set} of possible distributed value combinations on some set of nodes.
    The denotational semantics of a \HLL program with free distributed variables is a function that transforms input values for those variables to a set of possible output distributed value combinations.
 
    \subsection{\HLL Syntax}
    \label{sec:HLL_syntax}

    Our \HLL syntax uses the \emph{role} abstraction to describe distributed systems with different kinds of nodes (such as a leader or a replica) while abstracting the number of nodes that will be executing the code associated with a given role.  
    We use $\Lrole$ and $\Rrole$ as example roles.  

    A \HLL program describes an abstract distributed protocol as a high-level functional program that runs the same code for each node in a given role, albeit on different input values.
    Its syntax is defined as follows:

    \begin{definition}[\HLL Syntax]
        \begin{equation*}
            \begin{array}{lr}
                \text{Roles} & \Rrole, \Lrole \\
                \text{Meta-terms} & t \\
                \text{Channels} & c \\
                \text{Distributed Variables}& x \\
            \end{array}
        \end{equation*}
        \begin{equation*}
            \begin{array}{rcll}
                \text{\HLL Expressions} &&& \\
                    e & \Coloneqq & x_{\Rrole} & \text{(role-local variables)} \\
                      &           & \pure{t}{\Rrole} & \text{(replicated Rocq terms)} \\
                      &           & [t_1,\ldots,t_n]_{\Rrole} & \text{(vector of Rocq terms)} \\
                      &           & e_1\,e_2 & \text{(application)} \\
                                   \\
                 \text{\HLL Programs} &&& \\
                    p & \Coloneqq & \texttt{ret}\,\{R_1\mapsto e_1,\ldots,R_n\mapsto e_n\} & \text{(return a record)}\\
                      &           & \letexp{x}{p_1}{p_2} & \text{(sequencing)}\\
                      &           & \communicate{c}{e_m}{e_d}{e_f} & {\text{(communication)}}
            \end{array}            
        \end{equation*}
    \end{definition}

    \HLL is realized via a shallow embedding that inherits all of Rocq's terms at the expression level and adds \emph{programs} ($p$) that operate over a record mapping roles to vectors of values, one vector element for each node in the given role.  Critically, the only way for nodes to interact with each other is through communication.  Thus, a \HLL program $p$ is essentially a sequence of communication steps, where sequencing is accomplished through $\texttt{let}$ and terminated with a $\texttt{ret}$.  The $\texttt{ret}$ operation takes a record of expressions, one for each role, and those expressions are calculated for each node within that role, by mapping the expression's denotation over a vector of environments, one for each node. Thus, the variables bound in a \HLL $\texttt{let}$ are records mapping roles to vectors of values.  In order to ensure that one node cannot access values from another node, $\texttt{let}$-bound variables are restricted within expressions:  a given role can only access its vector of the record, and implicitly, a given node can only access its element of the vector.  In our Rocq formalism, we use parametric higher-order abstract syntax (PHOAS)~\cite{chlipalaParametricHigherorderAbstract2008} to represent variables and binding and enforce these constraints.  

    The program $\communicate{c}{e_m}{e_d}{e_f}$ involves a channel $c$, a sender role (given by the type of $e_m$), and a receiver role (given by the types of $e_d$ and $e_f$).
    When executed, all of the sender nodes calculate a message $e_m$ and send that message's value to all of the nodes in the receiver role via the channel $c$.  The receiver role's nodes each run a message handler that accepts the incoming messages and combines them with a folding function $e_f$ and a default value $e_d$. The channel  $c$ is used by the sender and the receiver nodes to distinguish messages for different rounds of communication. Our type system enforces that channel names are unique and used in a particular order. At the implementation level, this uniqueness is realized through sequence numbers.

    \subsection{Typing for \HLL Programs}
    \label{sec:HLL_typing}

    Our typing rules are largely standard but ensure that each expression is situated for a given role and each channel $c$ is used exactly once in a specified order.
    The typing judgment for expressions has the form $\texp{\Gamma}{\Rrole}{e}{\tau}$ where $\Gamma$ is a context mapping variables to record types indexed by roles, and is defined as follows:
    \begin{mathpar}
        \infer*[left=Var]
        { \Gamma(x) = \{\Rrole_1\hastype \tau_1,\ldots,\Rrole_n\hastype \tau_n\}}
        { \texp{\Gamma}{\Rrole_i}{x_{\Rrole_i}}{\tau_i}} \;\;\;
        
        \infer*[left=Lift]
        { t\hastype\tau }
        { \texp{\Gamma}{\Rrole}{\pure{t}{\Rrole}}{\tau}} \\

        \infer*[left=Vector]
        { t_1\hastype\tau \cdots t_n\hastype\tau}
        { \texp{\Gamma}{\Rrole}{[t_1,\ldots,t_n]}{\tau} } \;\;\;
        
        \infer*[left=App]
        { \texp{\Gamma}{\Rrole}{e_1}{\tau' \rightarrow \tau} \qquad \texp{\Gamma}{\Rrole}{e_2}{\tau'} }
        { \texp{\Gamma}{\Rrole}{e_1\,e_2}{\tau}}
    \end{mathpar}
    For variables, we find the record type associated with the variable by the context $\Gamma$, and then extract the type associated with the given role.  Note that it is not necessary for a given role to be present for all variables, so $x_{\Rrole}$ is only well-formed when $x$ is in $\Dom{\Gamma}$ and $\Rrole$ is in $\Dom{\Gamma(x)}$.
    For inherited Rocq terms, we simply ascribe them the type that Rocq gives them, but situated at a given role.  Finally, application is typed as expected.  
    
    The typing judgment for programs has the form $\tprog{\Delta}{\Gamma}{\mathcal{R}}{p}{\{R_i \hastype \tau_i\}}$ where $\Delta$ is an \emph{ordered} channel context mapping channel names $c$ to a triple $(S,R,\tau)$ of a sending role, a receiving role, and a type for messages to be sent on the channel, and where $\mathcal{R}$ is a set of roles that can be used in the program.  The judgment is defined with the following rules:
    \begin{mathpar}
        \infer*[left=Ret]
        { \texp{\Gamma}{\Rrole_i}{e_i}{\tau_i} \quad \Rrole_i \in \mathcal{R} }
        { \tprog{\emptyset}{\Gamma}{\mathcal{R}}{\mathtt{ret}\,\{\Rrole_i \mapsto e_i\}}{\{\Rrole_i \hastype \tau_i\}} } 

        \infer*[left=Let]
        { \tprog{\Delta_1}{\Gamma}{\mathcal{R}}{p_1}{\tau_1} \quad \tprog{\Delta_2}{\Gamma[x\hastype \tau_1]}{\mathcal{R}}{p_2}{\tau_2}\quad \Dom{\Delta_1}\cap \Dom{\Delta_2} = \emptyset }
        { \tprog{\Delta_1 \doubleplus \Delta_2}{\Gamma}{\mathcal{R}}{\letexp{x}{p_1}{p_2}}{\tau_2} }

        \infer*[left=Comm]
        { \texp{\Gamma}{\{S\}}{e_m}{\tau_m} \quad \texp{\Gamma}{\{R\}}{e_d}{\tau} \quad \texp{\Gamma}{\{R\}}{e_f}{\tau \rightarrow \tau_m \rightarrow \tau} \quad S,R \in \mathcal{R}}
        { \tprog{[c : (S,R,\tau_m)]}{\Gamma}{\mathcal{R}}{\communicate{c}{e_m}{e_d}{e_f}}{\{R \hastype \tau\}} }
    \end{mathpar}
    For $\mathtt{ret}$, we simply check that each expression in the returned record is well-typed at the corresponding role.  In this case, the channel context is empty since this command does not do any communication.  For $\letexp{x}{p_1}{p_2}$, we check that $p_1$ has a (record) type $\tau_1$ and then extend $\Gamma$ with the assumption that $x\hastype\tau_1$ to check that $p_2$ has the (record) type $\tau_2$.  Note that here, the channel contexts are checked to be disjoint and are appended in order.  For $\communicate{c}{e_m}{e_d}{e_f}$, the channel context has \emph{only} $c$ associated with a sending role $S$, receiving role $R$, and message type $\tau_m$.  We check that the message expression ($e_m$) is typed at the sending role $S$ with type $\tau_m$, and that the default ($e_d$) and combining function ($e_f$) expressions are typed at the receiving role $R$ as $\tau$ and $\tau\rightarrow\tau_m\rightarrow\tau$ respectively. 
    Finally, the communication step only returns $\tau$ values for the nodes of the receiver role, so the resulting record type is the singleton $\{ R \hastype \tau \}$.
    
    \subsection{Denotational Semantics for \HLL}

    \subsubsection{Notation:}
    
    For any natural number $n$ and type $\tau$, $\vectortp{n}{\tau}$ describes a sequence of $n$ elements of type $\tau$.
    We use $\vec{v}$ to make clear the value is a vector value and write $\vec{v}@i$ for the operation that projects element $i$ from vector $\vec{v}$.  
    The operations $\mathsf{map}\ f\ \vec{v}$, $\mathsf{foldl}\ f\ d\ \vec{v}$, $\mathsf{zip}\ \vec{v}_1 \ \vec{v}_2$, $\mathsf{unzip}\ \vec{v}$ are the usual map, fold, zip, and unzip for vectors.
    The operation $\mathsf{const}\,n\,v$ replicates the value $v$ $n$ times to produce a vector.  
    
    For any set or type $\tau$, $\powerset{\tau}$ denotes the power set of $\tau$.
    We also use the power set as a monad with the following notations.
    \begin{equation*}
        \begin{array}{ll}
            \powerset{f} \hastype \powerset{A} \tighten{\rightarrow} \powerset{B} & \makecell[l]{\text{lifting function } f \text{ into its set version by applying } f \text{ to all elements}} \\
            \mathsf{singleton}(x) = \{x\} \hastype \powerset{\tau} & \text{the singleton set, also the monadic return} \\
            \bind \hastype \powerset{A} \tighten{\rightarrow} (A \tighten{\rightarrow} \powerset{B}) & \makecell[l]{ \text{the bind applies the function to all elements and takes the union}}
        \end{array}
    \end{equation*}
    For any record $e$, we write $e@R$ to extract the record field named by $R$. We also have a shorthand $\Pi \hastype \vectortp{n}{(\powerset{\tau})} \tighten{\rightarrow} \powerset{\vectortp{n}{\tau}}$ that forms a set of vectors from a vector of sets by enumerating all combinations of each element.
    
    \subsubsection{Configuration}
    
    The denotation of a \HLL program is parameterized by a \emph{configuration}, which is defined as follows:
    
    \begin{definition}[Distributed System Configuration]
        \label{def:configuration}
        Given a role set $\mathcal{R}$, a configuration $\mathcal{C}$ defines the following for each role $\Rrole \in \mathcal{R}$:
        \begin{itemize}
            \item The total number of nodes $n_\Rrole$ and an upper bound on the number of faults to tolerate $f_\Rrole$.

            \item A set of $g_\Rrole$ nodes that are correct or follow the protocol until crash $\mathsf{Good}_{\Rrole} = \{r_1, \dots, r_{g_\Rrole}\}$.

            \item A set of $b_\Rrole$ Byzantine nodes $\mathsf{Byz}_{\Rrole}$, $b_\Rrole \le f_\Rrole$ and $b_\Rrole + g_\Rrole = n_\Rrole$.

        \end{itemize}
    \end{definition}
    Given a configuration, we define a relation $\netsim_{\Rrole}$ between a vector of messages sent by nodes in $\mathsf{Good}_{\Rrole}$ to a list of messages that a receiver node may possibly receive.  We use this relation to calculate all of the ``bad'' things that can happen in the network, such as dropping a message, permuting the order in which messages are received, or injecting Byzantine messages.  In our setting, $\netsim_{\Rrole}$ is defined as a predicate (\texttt{prop}) on a vector and list of messages as follows:
    \begin{equation*}
        \netsim_{\Rrole} \coloneqq \lambda (\vec{v} \hastype \vectortp{g_\Rrole}{\tau}).\ (\mathsf{add\_any}\, \vec{v}\, b_\Rrole) \bind \mathsf{perm} \bind (\lambda \vec{v}.\ \mathsf{trunc}\, \vec{v}\, (n_\Rrole - f_\Rrole))
    \end{equation*}
    The term $\mathsf{add\_any}\, \vec{v}\, b_\Rrole$ produces the set of all vectors that add up to $b_\Rrole$ arbitrary values of type $\tau$ to the end of $\vec{v}$. The added messages model those sent by a Byzantine node and can take any value. Note that Byzantine nodes may also send multiple messages, but we assume a correct receiver will only process at most one message from each sender node.  The term $\mathsf{perm}$ produces the set of all permutations of the input vector and models the arbitrary order in which messages may be received.
    Finally, the term $\mathsf{trunc}\, \vec{v}\, (n_\Rrole - f_\Rrole)$ produces all prefixes of $\vec{v}$ whose length is at least $n_\Rrole - f_\Rrole$.
    This models a receiver node not receiving some of the messages.  
    Due to the asynchronous network, a node cannot distinguish between a message being delayed and a message never sent.
    So in practice, a receiver only waits for a certain number of messages before continuing.
    In our case, the maximum number of messages to wait for is $n_\Rrole - f_\Rrole$ because a correct node always sends its message, and the total number of faulty nodes, either crashing or Byzantine, is bounded by $f_\Rrole$. 

    This definition is an over-approximation of the possible behaviors we can observe.  For instance, this definition allows Byzantine nodes to generate messages with ``secrets'' that they may not know.  Nevertheless, the over-approximation ensures that we cover all of the cases that need to be considered and is yet precise enough that we can still prove useful properties.  
    
    \subsubsection{Denotational Semantics}

    We begin by giving a denotation to \HLL program types and in particular, we define:
    \[
        \denote{\{R_1\hastype\tau_1,\ldots,R_n\hastype\tau_n\}} = \{R_1\hastype\vectortp{g_{R_1}}{\tau_1},\ldots,R_n\hastype\vectortp{g_{R_n}}{\tau_n} \}
    \]
    
    That is, $\HLL$ records are translated to records of vectors, where the lengths of the vectors are determined by the parameters of the configuration.  Next, we lift the translation to variable contexts by defining:
    \[
        \denote{[x_1\hastype\tau_1,\ldots,x_n\hastype\tau_n]} = \{x_1\hastype\denote{\tau_1},\ldots,x_n\hastype\denote{\tau_n} \}
    \]
    
    Thus, environments are represented as records mapping variable names to values, where the values are records mapping roles to vectors.  Now we can define the denotation of (derivations of) terms as a recursively defined meta-function with type:
    \[
        \denote{\tprog{\Delta}{\Gamma}{\mathcal{R}}{p}{\tau}} : \denote{\Gamma} \rightarrow \powerset{\denote{\tau}}
    \]
    which is defined as follows:\footnote{Formally, the definition is over derivations of the typing judgment, but to simplify the presentation, we present the definition over \HLL syntax}:

    \begin{definition}[Denotational Semantics of \HLL Programs]
        \label{def:denotational_sem}
        \begin{align*}
        \denote{\mathtt{ret}\,\{R_i\mapsto e_i\}} = \lambda v.\ & \mathsf{singleton}\ \{ R_i \mapsto \denote{e_i}v \} \\
        \denote{\letexp{x}{p_1}{p_2}} = \lambda v.\ & \denote{p_1}v \bind (\lambda y.\ \denote{p_2}(v[x\mapsto y])) \\
        \denote{\communicate{c}{e_m}{e_d}{e_f}} = \lambda v.\ & \mathsf{let}\ \mathtt{msgs} \coloneqq \denote{e_m}v\ \mathsf{in} \\
                                                              & \mathsf{let}\ \mathtt{netmsgs} \coloneqq \netsim_{S}(\mathtt{msgs}) \ \mathsf{in}\\ 
                                                              & \mathsf{let}\ \mathtt{pairs} \coloneqq \mathsf{zip} (\denote{e_f}v)\ (\denote{e_d}v) \ \mathsf{in}\\
                                                              & \mathsf{let}\ \mathtt{app} \coloneqq \lambda (f,d).\ \powerset{\mathsf{foldl}\ f\ d}\ \mathtt{netmsgs} \ \mathsf{in} \\
                                                              & \powerset{\lambda x.\{R\mapsto x\}} (\Pi(\mathsf{map}\ \mathtt{app}\ \mathtt{pairs}))
        \end{align*}
    \end{definition}
    The definition relies upon a denotation for \HLL expressions which has type:
    \[
      \denote{\texp{\Gamma}{R}{e}{\tau}} : \denote{\Gamma} \rightarrow \vectortp{g_R}{\tau}
    \]
    and is defined by:
    \begin{align*}
        \denote{x@R} = \lambda v.\ & v@x@R \\
        \denote{\pure{t}{R}} = \lambda v.\ & \mathsf{const}\ g_R\ t \\
        \denote{[t_1,\ldots,t_n]_{R}} = \lambda v.\ &[t_1,\ldots,t_n] \quad (n = g_R) \\
        \denote{e_1\ e_2} = \lambda v. \ & \mathsf{map}\ (\lambda (f,x).f\ x)\ (\mathsf{zip}\ \denote{e_1}v\ \denote{e_2}v) 
    \end{align*}

    Working backwards, the denotation of a variable simply extracts that variable from the environment ($v@x$), which yields a record of vectors. We then extract the vector of the corresponding role $R$ at which the expression is situated.  The denotation of an embedded Rocq term $t$ replicates the term as a vector for each of the good nodes at the given role. 
    The denotation of a vector of Rocq terms is just that vector, provided its length agrees with the number of nodes in the given role.  In practice, we only use vector literals in the meta-theory, so this assumption is easy to discharge.
    The denotation of an application $e_1\,e_2$ is given by calculating the denotation of $e_1$ and $e_2$ respectively, which should return a vector of functions and a vector of values, respectively.  We zip the two vectors into a vector of pairs, and then map the apply function across the resulting vector.  

    The denotation of programs is more involved because we return a set of records of vectors. We use the powerset monad to make the definitions a little simpler.  The cases for $\mathtt{ret}$ and $\mathtt{let}$ straightforwardly translate into the singleton and bind in the powerset monad.  For $\communicate{c}{e_m}{e_d}{e_f}$, we first compute a vector of messages from all of the good senders.  We then use $\netsim$, which can be treated as a function with the powerset monad, to calculate a set of lists of possible messages, which models the effect of the network.  Recall that this set includes the original vector of messages, but also added Byzantine messages, permutations, and dropped messages.  
    We wish to take each possible list of messages in $\mathtt{netmsgs}$ and fold each receiver's combining function and default value over the list of messages.  
    This is accomplished through $\mathsf{map}\ \mathtt{app}\ \mathtt{pairs}$ and $\Pi$ to convert a vector of sets of values to a set of vectors of values.
    Finally, we must place each of these vectors into a record with a field for the receiver role ($\powerset{\lambda x.\{R\mapsto x\}}$).

    It is easy to see that the denotation respects monadic laws, so for instance, $\mathsf{let}$s can be flattened:
    \[
        \denote{\letexp{x_1}{(\letexp{x_2}{p_2}{p_3})}{p_4}} = 
        \denote{\letexp{x_2}{p_2}{\letexp{x_1}{p_3}{p_4}}}
    \]
    
    In fact, every program is equivalent to one in a "let-comm" normal form:
    \[
        \mathit{p} \ \Coloneqq \ \mathtt{ret}\ \{R_i \mapsto e_i \} \ |\  \letexp{x}{\communicate{c}{e_m}{e_d}{e_f}}{\mathit{p}}
    \]
   which we will leverage in the proof.  

    \subsubsection{An Example Program}

    We go through the \texttt{SimpleVote} program in Figure \ref{fig:example-hll} to provide more intuition.  The \texttt{SimpleVote} program is typed by 
    \[
    \tprog{[c:(\Rrole,\Lrole,\bool)]}{[p \hastype \{\Lrole \hastype \bool\}, x \hastype \{\Rrole \hastype \bool\}]}{\{\Lrole, \Rrole\}}{\texttt{SimpleVote}}{\{\Lrole \hastype \texttt{option}\ \bool\}}
    \]
    
    It requires a single Boolean value as input at all good nodes and produces an optional Boolean value at the leader as the output.

    For demonstration, we use a concrete configuration where there is a single leader node and $n_\Rrole=4$ replica nodes, $r_1, r_2, r_3, r_B$, of which $r_B$ is Byzantine and $f_\Rrole=b_\Rrole=1$.
    
    The denotation of \texttt{SimpleVote} is thus a function of type
    \[
    \{p \hastype \{\Lrole \hastype \vectortp{1}{\bool} \}, x \hastype \{\Rrole \hastype \vectortp{3}{\bool}\}\} \rightarrow \powerset{\{\Lrole \hastype \vectortp{1}{(\texttt{option}}\ \bool)\}}
    \]

    We use $\top$ and $\bot$ to represent Boolean true and false.
    Suppose the concrete input is $\{p \mapsto \{\Lrole \mapsto [\top]\} , x \mapsto \{\Rrole \mapsto [\top, \top, \bot] \} \}$, which means the leader's input is $\top$ and the replicas' inputs are $[\top, \top, \bot]$ for $r_1$, $r_2$, and $r_3$ respectively.
    All possible lists of messages that the leader node could receive are given by the $\netsim_\Rrole$ relation.
    Let $\top(r_i)$ represent a value $\top$ sent by node $r_i$.
    The exact set is all permutations of 
    \begin{align*}
        & \{\top(r_1), \top(r_2), \top(r_B)\}, \{\top(r_1), \bot(r_3), \top(r_B)\}, \{\top(r_2), \bot(r_3), \top(r_B)\}, & \\
        & \{\top(r_1), \top(r_2), \bot(r_B)\}, \{\top(r_1), \bot(r_3), \bot(r_B)\}, \{\top(r_2), \bot(r_3), \bot(r_B)\}, & \\
        & \{\top(r_1), \top(r_2), \bot(r_3), \top(r_B)\}, \{\top(r_1), \top(r_2), \bot(r_3), \bot(r_B)\}, \{\top(r_1), \top(r_2), \bot(r_3)\} &
    \end{align*}
    
    Because the input to the leader node is $\top$, $\mathtt{fcnteq}$ counts the number of $\top$ elements in the received messages, and the order of the elements does not influence its output.
    Thus, the set of possible communication results is $\{1, 2, 3\}$.
    $\mathtt{calc\_dec}$ outputs $\Some{\top}$ if there are at least $2=(n - 2 * f)$  votes for $\top$ and $\None$ otherwise.
    So, the set of possible outputs of the leader node is $\{\Some{\top}, \None\}$ given this particular input. This leads to the set of possible final outputs $\{\{\Lrole \mapsto [\Some{\top}]\}, \{\Lrole \mapsto [\None]\} \}$.
    
    \section{Case Studies}
    \label{sec:case_studies}

    In this section, we demonstrate the use of the denotational semantics of \HLL for proving safety properties of two concrete consensus protocols in a Hoare-like style.
    We sketch the steps we followed in our mechanized proof, which is about 2.5k LoC in total, including all the definitions and proofs of the languages and the case studies.
    
    \subsection{Modeling Non-terminating Consensus Protocols}

    In general, the goal of a consensus protocol is to have a collection of nodes agree on a common value, even in the presence of faulty nodes and unpredictable network behaviors.
    The famous FLP impossibility result \cite{fischerImpossibilityDistributedConsensus1985} states that in a fully asynchronous network, there is no non-trivial consensus protocol that is safe, guaranteed to terminate, and can tolerate even a single crash failure. Because of this, consensus protocols are usually not guaranteed to terminate without making stronger assumptions.  

    In our framework, we model protocols that execute a \HLL program in a loop.  
    Since we only consider safety properties, it is sufficient to consider the finite unwindings of the loop.   
    We assume the loop body is a \HLL program that takes in some input and produces an output that contains the input to the next iteration.
    We use the construct $\iter{b}{k}$ to unroll the loop $k$ times and provide fresh channel names for all iterations.

    More formally, a protocol body $b$ is of the form $\nu c_1,\ldots,c_n.\lambda x.p$ where the free channel names in $p$ are bound via the fresh quantifier $\nu$ 
    and $x$ is the input to the protocol.  Given two protocol bodies $b_1$ and $b_2$, their concatenation $b_1 \doubleplus b_2$ is defined as follows:
    \[
      (\nu c_1,\ldots,c_n.\lambda x.p) \doubleplus (\nu c_1',\dots,c_n'.\lambda x'.p') = \nu c_1,\ldots,c_n,c_1',\ldots,c_n'.\lambda x.p[\lambda x'.p'] 
    \]
    where $p[\lambda x'.p']$ is a form of monadic substitution defined by:
    \[ 
      \begin{array}{rcl}
        (\mathtt{ret}\,\{R_i\mapsto e_i\})[\lambda x'.p'] & = & p'[e_i/x'@R_i] \\
        (\letexp{x}{p_1}{p_2})[\lambda x'.p'] & = & \letexp{x}{p_1}{(p_2[\lambda x'.p'])} \\
        (\communicate{c}{e_m}{e_d}{e_f})[\lambda x'.p'] & = & \letexp{x'}{\communicate{c}{e_m}{e_d}{e_f}}{p'} 
      \end{array}
    \]
    
    Note that in the $\mathtt{ret}$ case, our syntax does not support substituting the record expression for the variable $x$, so we must project each role $R$'s expression and substitute that for each occurrence of $x_R$.  With this definition in hand, given a protocol body $b$, we can define $b^k$ as:
    \[
        \begin{array}{rcl}
           b^0 & = & b \\
           b^{k+1} & = & b \doubleplus b^k
        \end{array}
    \]
  
    \subsection{Bosco}

    Bosco~\cite{songBoscoOneStepByzantine2008} stands for \underline{\textbf{B}}yzantine \underline{\textbf{O}}ne-\underline{\textbf{S}}tep \underline{\textbf{CO}}nsensus, a consensus algorithm that only performs a single round of communication per iteration.
    It is a symmetric protocol with only one role, which we denote as $\Rrole$ for replica.
    We assume two global parameters: $n$ is the total number of nodes, and $f$ is the number of Byzantine faults to tolerate.
    For simplicity, we analyze the binary version of the protocol, where the goal is to have all nodes decide on either $\top$ or $\bot$.
    A single iteration of Bosco is defined below:
    \begin{definition}[Binary Bosco in \HLL Syntax]
        \label{alg:Bosco}
        \begin{align*}
            \linen{1} & \mathtt{Bosco}\;(v \hastype \{\Rrole \hastype \bool\}) \hastype \{\Rrole \hastype (\option{\bool})*\bool \} \coloneqq \\
            \linen{2} & \quad \letexp {\mathtt{cnts}} {\communicate{c}{v_\Rrole}{\pure{(0,0)}{\Rrole}}{\pure{\mathtt{fcntb}}{\Rrole}}} {} \\
            \linen{3} & \quad \mathtt{ret}\; (\pure{\mathtt{mkdec}}{\Rrole}\ \mathtt{cnts}_\Rrole) \\
            & \mathsf{where} \\ 
            \linen{4} & \mathtt{fcntb} \coloneqq \lambda\; (\mathtt{cnt}_\top, \mathtt{cnt}_\bot), v.\;\myifthenelse{v}{(\mathtt{cnt}_\top{+}1, \mathtt{cnt}_\bot)}{(\mathtt{cnt}_\top, \mathtt{cnt}_\bot{+}1)} \\
            \linen{5} & \mathtt{mkdec} \coloneqq \lambda\; (\mathtt{cnt}_\top, \mathtt{cnt}_\bot).\; \\
            \linen{6} & \quad \letexp { (\mathtt{newv}, \mathtt{cnt})} {\myifthenelse{\mathtt{cnt}_\top \ge \mathtt{cnt}_\bot}{(\top, \mathtt{cnt}_\top)}{(\bot, \mathtt{cnt}_\bot)}}{} \\
            \linen{7} & \quad \myifthenelse{\mathtt{cnt} * 2 > n + 3 * f}{(\Some{\mathtt{newv}}, \mathtt{newv})}{(\None, \mathtt{newv})} 
        \end{align*}
    \end{definition}

    An iteration of the Bosco protocol requires a single Boolean value at every node as input.
    On line 2, every node broadcasts its input value $v_\Rrole$ to every other node and counts the number of received $\top$ and $\bot$.
    On line 3, each node computes the decision and the input to the next iteration with function $\mathtt{mkdec}$.
    On line 6, in $\mathtt{mkdec}$, a node compares the numbers of $\top$s and $\bot$s it receives and records the value with more votes as the $\mathtt{newv}$ and the number of its occurrences as $\mathtt{cnt}$.
    On line 7, each node produces a decision value $\mathtt{dec}$, which is $\Some{\mathtt{newv}}$ if $\mathtt{cnt}$ is greater than the threshold $\frac{n + 3f}{2}$ or $\None$ if otherwise.
    $\mathtt{newv}$ is to be passed to the next iteration as the input.

    Using our semantic definitions, we prove the following two properties for Bosco:
    \begin{enumerate}
        \item (One Step): When $n > 7f$, if all correct nodes have the same input $B$ in some iteration, then all correct nodes decide and output $\Some{B}$ in that iteration.
        \item (Agreement): When $n > 3f$, 
        if a correct node outputs $\Some{B_1}$ in an iteration and another correct node, not necessarily a different one, outputs $\Some{B_2}$ in a not necessarily different iteration, then $B_1 = B_2$.
    \end{enumerate}

    \begin{definition}[Bosco Configuration]
    We assume a configuration $\mathcal{C}$ with one role $\Rrole$, good nodes $\mathsf{Good}_{\Rrole} = \{ r_1, \ldots, r_{n-b} \}$, and $b$ Byzantine nodes where $b \leq f$. 
    \end{definition}
    
    Proving the one-step property involves a single iteration, and the property is defined formally as:
    \begin{theorem}[Strongly one-step]
        If $n > 7f$, then for all Boolean values $B$, input vectors $\Vec{x}$ such that $\forall r_i, \Vec{x}@r_i = B$, 
        and output values $\{\Rrole \mapsto \Vec{y} \} \in \denote{P} \{x \mapsto \{\Rrole \mapsto \Vec{x}\}\}$, 
        we have $\forall r_i, \Vec{y}@r_i = (\Some{B}, B)$.
    \end{theorem}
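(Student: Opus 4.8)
The plan is to unfold the denotational semantics of the single Bosco iteration $P$ into a concrete description of its outputs, bound the message multisets that $\netsim_{\Rrole}$ permits, and then close with a two-case analysis on $B$ plus one arithmetic check against the threshold inside $\mathtt{mkdec}$. For the first step, write $v$ for the input environment $\{x \mapsto \{\Rrole \mapsto \vec{x}\}\}$ and note $g_{\Rrole} = n - b$, so $\vec{x} = \mathsf{const}\,(n-b)\,B$. Unfolding Definition~\ref{def:denotational_sem}, the $\texttt{let}$ becomes the communication step bound via $\bind$ to a continuation running $\texttt{ret}\,(\pure{\mathtt{mkdec}}{\Rrole}\,\mathtt{cnts}_{\Rrole})$. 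Because the message, default, and folding expressions $v_{\Rrole}$, $\pure{(0,0)}{\Rrole}$, and $\pure{\mathtt{fcntb}}{\Rrole}$ all denote constant vectors, $\mathtt{pairs}$ and $\mathsf{map}\,\mathtt{app}\,\mathtt{pairs}$ are constant vectors, so $\Pi$ enumerates, independently for each good node $r_i$, a value $\foldl{\mathtt{fcntb}}{(0,0)}{L_i}$ for some $L_i \in \netsim_{\Rrole}(\vec{x})$; composing with the $\texttt{ret}$ case ($\mathsf{map}\,\mathtt{mkdec}$ over the resulting vector), every output $\{\Rrole \mapsto \vec{y}\} \in \denote{P}\,v$ satisfies $\vec{y}@r_i = \mathtt{mkdec}(\foldl{\mathtt{fcntb}}{(0,0)}{L_i})$ for such an $L_i$. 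It therefore suffices to show: for every $L \in \netsim_{\Rrole}(\vec{x})$, $\mathtt{mkdec}(\foldl{\mathtt{fcntb}}{(0,0)}{L}) = (\Some{B}, B)$.

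For the second step, unfold $\netsim_{\Rrole}(\vec{x}) = (\mathsf{add\_any}\,\vec{x}\,b) \bind \mathsf{perm} \bind (\lambda \vec{w}.\ \mathsf{trunc}\,\vec{w}\,(n-f))$. Any $L \in \netsim_{\Rrole}(\vec{x})$ is a length-$\ell$ prefix of some permutation of $\vec{x}$ extended with $b$ adversarial Booleans, where $n - f \le \ell \le n$. Since every element of $\vec{x}$ equals $B$, all occurrences of $\lnot B$ in $L$ come from the $b$ added values, so $L$ has at most $b \le f$ copies of $\lnot B$ and hence at least $\ell - b \ge n - f - b \ge n - 2f$ copies of $B$. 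Writing $\foldl{\mathtt{fcntb}}{(0,0)}{L} = (\mathtt{cnt}_\top,\mathtt{cnt}_\bot)$, this pair records the numbers of $\top$s and $\bot$s in $L$ (independent of the order $\mathsf{perm}$ produces, since $\mathtt{fcntb}$ only counts); a case split on $B$ makes the count of $B$ one component and the count of $\lnot B$ the other. As $n > 7f$ forces $n - 2f > f$, the $B$-count is strictly larger, so inside $\mathtt{mkdec}$ we get $\mathtt{newv} = B$ and $\mathtt{cnt} \ge n - 2f$.

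For the final step, the threshold test in $\mathtt{mkdec}$ is $\mathtt{cnt} * 2 > n + 3f$; from $\mathtt{cnt} \ge n - 2f$ we obtain $\mathtt{cnt} * 2 \ge 2n - 4f$, and $2n - 4f > n + 3f$ is exactly the hypothesis $n > 7f$, so $\mathtt{mkdec}$ returns $(\Some{\mathtt{newv}}, \mathtt{newv}) = (\Some{B}, B)$; applying this to the list $L_i$ chosen at each good $r_i$ yields $\vec{y}@r_i = (\Some{B}, B)$ for all $r_i$. I expect the first step to be the only delicate part: the counting and arithmetic are routine, but pushing the output characterization cleanly through the powerset monad and the $\mathsf{zip}$/$\mathsf{map}$/$\Pi$ plumbing of the communication clause — so that each node's choice of a network list is isolated as an independent witness and the replicated expressions collapse to constant vectors — is where the care is needed.
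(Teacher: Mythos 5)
Your proposal is correct and follows essentially the same route as the paper: unfold the communication clause so each good node's output is $\mathtt{mkdec}$ applied to a fold over some witness list in $\netsim_{\Rrole}(\vec{x})$, bound the counts of $B$ and $\lnot B$ using the structure of $\mathsf{add\_any}$/$\mathsf{perm}$/$\mathsf{trunc}$, and close with the arithmetic $n-2f > \tfrac{n+3f}{2} \iff n > 7f$. The only cosmetic difference is that the paper factors the counting bounds into a reusable lemma about $\netsim_{\Rrole}$ (Lemma~\ref{lem:fcnteq}, also used later for Agreement), whereas you inline that argument.
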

    \begin{proof}
        We first give a lemma about $\netsim_\Rrole$.
        For convenience, we define $\#_v(\ell)$ as the number of occurrences of $v$ in the list $\ell$.
        \[
            \#_v(\ell)\coloneqq \foldl{(\mathtt{fcnteq}\ v)}{0}{\ell}\ 
            \text{where}\ 
            \mathtt{fcnteq} \coloneqq \lambda v, c, v'.\; \myifthenelse{v == v'}{c + 1}{c}
        \]
        \begin{lemma}
            \label{lem:fcnteq}
            For any role $\Rrole$ and any $\ell$ such that $|\ell| = n_\Rrole - b_\Rrole$, we have:
            \[\forall \ell' \in \netsim_\Rrole(\ell), v, \#_v(\ell) - f_\Rrole \le \#_v(\ell') \le \#_v(\ell) + b_\Rrole\]
        \end{lemma}
        This lemma can be proven by following the definition of $\netsim_\Rrole$ and $\#_v(\ell)$.
        
        Our main proof follows the structure of the Bosco program's semantics. 
        We start with the precondition $\forall r_i, \Vec{x}@r_i = B$. By definition, line $2$ applied to $\Vec{x}$ reduces to:
        \[
            \Pi(\mathsf{map}\ (\lambda (f, d).\;\powerset{\mathsf{foldl}\ f\ d}\ \netsim_\Rrole(\Vec{x}))\ (\mathtt{const}\ (n - b)\ (\mathtt{fcntb}\ (0,0)))
        \]
        
        So for any output of line $2$, $\mathtt{cnts}$, we have 
        \[
            \forall r_i, \mathtt{cnts}_\Rrole@r_i \in \powerset{\mathsf{foldl}\ \mathtt{fcntb}\ (0,0)}\ \netsim_\Rrole(\Vec{x})
        \]
        
        By the definition of the power set monad, $\powerset{\mathsf{foldl}\ \mathtt{fcntb}\ (0,0)}$ applies the function to every element of the input. Thus,
        \[
            \exists \ell \in \netsim_\Rrole(\Vec{x}), \mathtt{cnts}_\Rrole@r_i = \mathsf{foldl}\ \mathtt{fcntb}\ (0,0)\ \ell
        \]
        
        By definition, $\foldl{\mathtt{fcntb}}{(0, 0)}{\ell} = (\#_\top(\ell), \#_\bot(\ell))$.
        Let $(\mathtt{cnt}_\top, \mathtt{cnt}_\bot) \coloneqq \mathtt{cnts}_\Rrole@r_i$.
        Because of $\forall r_i, \Vec{x}@r_i = B$, $\#_B(\Vec{x})=n-b$ and $\#_{\neg B}(\Vec{x}) = 0$.
        By the lemma above, we know:
        \begin{align*}
            \mathtt{cnt}_B &= \#_B(\ell) \ge \#_B(\Vec{x}) - f = n - b - f \\
            \mathtt{cnt}_{\neg B} &= \#_{\neg B}(\ell) \le \#_{\neg B}(\Vec{x}) + b = b
        \end{align*}

        For line $3$, by the definition of $\mathtt{mkdec}$, because $n > 7f$, we have:
        \begin{align*}
            \mathtt{cnt}_B \ge n - b - f \ge n - 2f > 5f \ge b \ge \mathtt{cnt}_{\neg B}
        \end{align*}
        So, $\mathtt{newv} = B$ and $\mathtt{cnt} = \mathtt{cnt}_B \ge n - 2f > \frac{n + 3f}{2}$, which leads to the final output $(\Some{B}, B)$.
    \end{proof}

    To formalize the agreement property, we first need some auxiliary definitions:
    \begin{align*}
        \mathsf{Step}^k(\Vec{x}) & \coloneqq (\denote{\iter{\mathtt{Bosco}}{k}} \circ \powerset{@\Rrole} \circ \powerset{\mathtt{unzip}}) \{ x \mapsto \{\Rrole \mapsto \Vec{x} \} \} \\
        \mathsf{Decide}_B(\Vec{y}) & \coloneqq \exists r_i. \vec{y}@r_i = \Some{B} \\
        \mathsf{Comply}_B(\Vec{y}) & \coloneqq \forall r_i. \vec{y}@r_i = \Some{B} \vee \vec{y}@r_i = \None
    \end{align*}
    $\mathsf{Step}^k$ runs the protocol for $k + 1$ steps and extracts the results from the record.  
    $\mathsf{Decide}_B$ is a predicate on a vector that holds when there is an element equal to $\Some{B}$, which means a node has decided $B$, and
    $\mathsf{Comply}_B$ is a predicate on a vector that holds when every element is $\Some{B}$ or $\None$, which means each node either decides $B$ or nothing.
    
    We prove a stronger result that implies the agreement property.
    \begin{lemma}[Agreement']
        If $n > 3f$, then 
        \begin{align*}
        & \forall B, \vec{x}, (\vec{y}, \vec{z}) \in \mathsf{Step}^0(\vec{x}), 
                 \mathsf{Decide}_B(\vec{y}) \\ & \quad \Rightarrow 
                 [\mathsf{Comply}_B(\vec{y}) \wedge 
                 \forall k, (\vec{y}_k,\_) \in \mathsf{Step}^{k}(\vec{z}),  
                 \mathsf{Comply}_B(\vec{y}_k)] 
        \end{align*}
    \end{lemma}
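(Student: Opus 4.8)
I would prove Agreement' from two observations. The first is a quorum-style fact: if any one correct node decides $\Some B$ in iteration $0$, then strictly more than $(n+f)/2$ correct nodes must have held $B$ as their input to iteration $0$. The second is that this condition --- ``more than $(n+f)/2$ correct nodes hold $B$'', and in particular ``all correct nodes hold $B$'' --- is exactly what forces every correct node, in that iteration, to adopt $\mathtt{newv} = B$ and hence to output either $\Some B$ or $\None$, and is moreover \emph{preserved} by a Bosco step. Both observations reduce to a single local calculation about $\mathtt{mkdec}$ combined with Lemma~\ref{lem:fcnteq}, which is the only thing needed about $\netsim_\Rrole$.

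\textbf{The auxiliary claim.} The heart of the argument is: fix $(\vec y, \vec z) \in \mathsf{Step}^0(\vec x)$ and let $a$ be the number of correct $r_i$ with $\vec x@r_i = B$; if $2a > n + f$, then every correct $r_i$ computes $\mathtt{newv} = B$, so $\vec z@r_i = B$ and $\vec y@r_i \in \{\Some B, \None\}$, and in particular $\mathsf{Comply}_B(\vec y)$. To prove it, I would unfold $\denote{\mathtt{Bosco}}$ exactly as in the proof of the strongly-one-step theorem: each correct $r_i$ obtains $(\mathtt{cnt}_\top, \mathtt{cnt}_\bot) = (\#_\top(\ell_i), \#_\bot(\ell_i))$ for some $\ell_i \in \netsim_\Rrole(\vec x)$. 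Lemma~\ref{lem:fcnteq} then gives $\#_B(\ell_i) \ge a - f$ and $\#_{\neg B}(\ell_i) \le ((n-b)-a)+b = n-a$, so $2\,\#_B(\ell_i) \ge 2a - 2f > (n+f) - 2f = n-f$ while $2\,\#_{\neg B}(\ell_i) \le 2(n-a) < n-f$; hence $\#_B(\ell_i) > \#_{\neg B}(\ell_i)$. Since ties in $\mathtt{mkdec}$ break toward $\top$, a strict inequality in favour of $B$ pins $\mathtt{newv}$ to $B$ whether $B = \top$ or $B = \bot$, and then $\mathtt{mkdec}$ returns $(\Some B, B)$ or $(\None, B)$.

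\textbf{Instantiations and the induction.} First, $\mathsf{Decide}_B(\vec y)$ gives a correct $r_{i_0}$ with $\vec y@r_{i_0} = \Some B$; reading off $\mathtt{mkdec}$, this forces $\mathtt{newv} = B$ and $2\,\#_B(\ell_{i_0}) > n+3f$ at $r_{i_0}$, and since $\#_B(\ell_{i_0}) \le a + b \le a+f$ by Lemma~\ref{lem:fcnteq}, we get $2(a+f) > n+3f$, i.e. $2a > n+f$. So the auxiliary claim applies and yields the first conjunct $\mathsf{Comply}_B(\vec y)$ together with $\vec z@r_i = B$ for every correct $r_i$. Second, whenever $\vec w@r_i = B$ for all correct $r_i$ we have $a = n-b$ and $2(n-b) > n+f$ because $n > 3f \ge 2b+f$, so the auxiliary claim applies to any single step started from such a $\vec w$. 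For the second conjunct I would then induct on $k$ on the statement: for every $\vec w$ with $\vec w@r_i = B$ for all correct $r_i$ and every $(\vec y_k, \vec z_k) \in \mathsf{Step}^k(\vec w)$, both $\mathsf{Comply}_B(\vec y_k)$ and $\vec z_k@r_i = B$ for all correct $r_i$. The base case $k=0$ is the second instantiation; for $k+1$ I would decompose $\iter{\mathtt{Bosco}}{k+1} = \mathtt{Bosco} \doubleplus \iter{\mathtt{Bosco}}{k}$ so that a run of $\mathsf{Step}^{k+1}(\vec w)$ is a Bosco step producing some carried vector $\vec w'$ followed by a run of $\mathsf{Step}^{k}(\vec w')$; the second instantiation gives $\vec w'@r_i = B$, and the induction hypothesis on $\vec w'$ finishes. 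Taking $\vec w := \vec z$ then closes the proof.

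\textbf{Main obstacle.} Mathematically the only real content is the inequality $2a > n+f$ and the check that it beats the asymmetric tie-break in $\mathtt{mkdec}$; this is the classical Bosco ``lock'' argument, and Lemma~\ref{lem:fcnteq} already discharges everything about the network. The more delicate step, especially to mechanize, is the decomposition of $\denote{\iter{\mathtt{Bosco}}{k+1}}$ into ``one Bosco step, then $\denote{\iter{\mathtt{Bosco}}{k}}$'' at the level of output sets: this needs a small structural lemma that the monadic substitution defining $\doubleplus$ commutes with the denotation (so that the vector carried out of the first iteration is precisely the second projection of that iteration's $\mathtt{mkdec}$ output), after which the induction is routine.
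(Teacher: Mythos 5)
Your proof is correct and follows essentially the same route as the paper's: both hinge on the univalent condition $\#_B(\vec{x}) > \tfrac{n+f}{2}$ (your ``$2a > n+f$''), derive it backward from $\mathsf{Decide}_B$ via $2\,\mathtt{cnt}_B > n+3f$ and Lemma~\ref{lem:fcnteq}, and push it forward to show every correct node gets a strict majority for $B$ (defeating the tie-break) and carries $B$ into the next iteration. The only cosmetic difference is that you induct on the invariant ``all correct nodes hold $B$'' where the paper inducts on $UC_B(\vec{z})$ itself; since one step from a univalent state makes every correct node adopt $B$, the two invariants coincide after the first iteration, and your closing remark about decomposing $\denote{\iter{\mathtt{Bosco}}{k+1}}$ via the monad laws matches what the paper's induction implicitly uses.
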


    \begin{proof}
        Proofs like these revolve around some \emph{univalent condition}. It is a predicate on the system state.
        All reachable states from a state that satisfies the predicate can only decide on one certain value.
        In our case, thanks to our synchronous functional semantics, the exact univalent condition is exactly the weakest precondition such that the $\mathtt{newv}$ computed on line $6$ can only be a certain value for all the possible sets of messages being received, as follows:
        \[
        UC'_B(\vec{x}) \coloneqq \forall \vec{x'} \in \netsim_\Rrole(\vec{x}), \mathsf{snd}(\mathtt{mkdec}(\foldl{\mathtt{fcntb}}{(0,0)}{\vec{x'}}))=B
        \]
        
        Due to the asymmetrical nature of the $\le$ comparison, this is equivalent to:
        \[
        UC'_\top(\vec{x}) = \#_\top(\vec{x}) \ge \frac{n+f}{2} \quad
        UC'_\bot(\vec{x}) = \#_\bot(\vec{x}) > \frac{n+f}{2}
        \]

        The decision procedure for Bosco does not utilize this asymmetry~\cite{songBoscoOneStepByzantine2008}.
        An optimization is possible by using $\ge$ instead of $>$ in the condition on line $7$ only when $\mathtt{newv}$ is $\top$.
        Here, for simplicity, we use a slightly stronger symmetric predicate that is still sufficient to prove our result:
        \[UC_B(\vec{x}) \coloneqq \#_B(\vec{x}) > \frac{n+f}{2}\]
        
        And we prove the agreement property by proving the following:
        \begin{align}
            & \forall \vec{x}, (\vec{y},\vec{z}) \in \mathsf{Step}^0(\vec{x}), \mathsf{Decide}_B(\vec{y}) \Rightarrow UC_B(\vec{x}) &\\
            & \forall \vec{x}, k, (\vec{y}, \vec{z}) \in \mathsf{Step}^k(\vec{x}), UC_B(\vec{x}) \Rightarrow [\mathsf{Comply}_B(\vec{y}) \wedge UC_B(\vec{z})] &
        \end{align}
        
        We prove $(1)$ by walking through the program backward with $\mathsf{Decide}_B(\vec{y})$.
        Because our program is nondeterministic, walking through the program backward requires us to infer a predicate that covers all possible input values that can lead to any output values satisfying the postcondition.

        \begin{itemize}
            \item On line $3$, we know that for some $r_i$, $\mathtt{dec} = \Some{B}$.
            \item By the definition of $\mathtt{mkdec}$, it is necessary to have $\mathtt{cnt}_B > \frac{n+3f}{2}$.
            \item On line $2$, by Lemma \ref{lem:fcnteq}, $\#_B(\vec{x}) \ge \mathtt{cnt}_B - b > \frac{n+f}{2}$, which is our goal, $UC_B$.
        \end{itemize}

        For $(2)$, we do an induction on $k$.
        The only case that needs to be proven is $k = 0$.
        We walk through the program forward to infer a predicate that covers all possible outputs.
        \begin{itemize}
            \item On line $2$, by Lemma \ref{lem:fcnteq} and $UC_B(\vec{x})$, for any $r_i$ and any possible network, we have $\mathtt{cnt}_B > \frac{n+f}{2} - f \wedge \mathtt{cnt}_{\neg B} \le \#_{\neg B}(\vec{x}) + b < n + f - \frac{n+f}{2} = \frac{n + f}{2}$. So $\mathtt{cnt}_B > \mathtt{cnt}_{\neg B}$.
            \item By the definition of $\mathtt{mkdec}$, $\mathtt{newv} = B$ and $\mathtt{dec} = \Some{\mathtt{newv}} = \Some{B} \vee \mathtt{dec} = \None$.
            \item On line $3$, we have $\forall r_i,\ (\vec{y}@r_i = \Some{B} \vee \vec{y}@r_i = \None) \wedge (\vec{z}@r_i = B)$.
        \end{itemize}
        $\forall r_i,\ \vec{y}@r_i\Some{B} \vee \vec{y}@r_i = \None$ implies $\mathsf{Comply}_B(\vec{y})$.
        $\forall r_i,\ \vec{z}@r_i = B$ implies $\#_B(\vec{z}) = n - b \ge n - f$.
        Because $n > 3f$, we have $\#_B(\vec{z}) > \frac{n+f}{2}$, which is $UC_B$.
    \end{proof}

    \subsection{Sequential Paxos}

    Sequential Paxos, or SeqPaxos for short, is a variant of the Synod consensus protocol used by Paxos~\cite{lamportParttimeParliament1998}.
    Similar to the original single-decree Paxos protocol, it is a crash fault-tolerant protocol with a single leader node and an arbitrary number of so-called acceptor nodes, which we will call \emph{replicas}.
    
    Let $n$ be the total number of replicas and $f$ be the number of replica crashes to tolerate.
    The single leader may also crash.
    We assume the domain of the consensus value is of type $\mathds{V}$, for which a decidable equality exists. 
    Additionally, we assume that there is a default value for the leader to propose in each iteration, denoted as a global variable $\mathtt{default} \hastype \nat \funarrow \mathds{V}$.
    A single iteration of SeqPaxos defined in \HLL is listed below:
    \begin{definition}{SeqPaxos}
        \begin{align*}
            \linen{1} & \mathtt{SeqPaxos}\; (x \hastype \{\Lrole \hastype \nat, \Rrole \hastype (\option{\mathds{V}}) \tighten{*} \nat \}) \\ 
                      & \qquad \hastype \{\Lrole \hastype (\option{\mathds{V}}) \tighten{*} \nat, \Rrole \hastype (\option{\mathds{V}}) \tighten{*} \nat \} \}  \coloneqq \\  
            \linen{2} & \quad \letexp{\mathtt{maxv}}{\communicate{c_1}{x_\Rrole}{\pure{(\None, 0)}{\Lrole}}{\pure{\mathtt{fmaxr}}{\Lrole}}}{} \\
            \linen{3} & \quad \letexp{p}{\mathtt{ret}\; \{\Lrole \mapsto \pure{\mathtt{pickp}}{\Lrole}\ \mathtt{maxv}_\Lrole\ (\pure{\mathtt{default}}{\Lrole}\ (x_\Lrole))\}}{} \\
            \linen{4} & \quad \letexp{y}{\communicate{c_2}{(\pure{\mathtt{pair}}{\Lrole}\ p_\Lrole\ x_\Lrole)}{x_\Rrole}{\pure{\mathtt{update}}{\Rrole}}}{}\\
            \linen{5} & \quad \letexp{\mathtt{cnt}}{\communicate{c_3}{y_\Rrole}{\pure{0}{\Lrole}}{(\pure{\mathtt{fcnteq}}{\Lrole}\ x_\Lrole)}}{} \\
            \linen{6} & \quad \mathtt{ret}\; \{\Lrole \mapsto \pure{\mathtt{pair}}{\Lrole}\ (\pure{\mathtt{mkdec}}{\Lrole}\ \mathtt{cnt}_\Lrole\ p_\Lrole) \ (\pure{\mathtt{add}}{\Lrole}\ x_\Lrole\ \pure{1}{\Lrole}), \Rrole \mapsto y_\Rrole\}\\
            & \mathsf{where} \\
            \linen{7} & \quad \mathtt{fmaxr} \coloneqq \lambda\; (v, r), (v', r').\;\myifthenelse{r < r'}{(v', r')}{(v, r)}\\
            \linen{8} & \quad \mathtt{pickp} \coloneqq \lambda\; (ov, \_), d. \; \mathsf{match}\ ov\ \mathsf{with}\ |\;\Some{v} \Rightarrow v\ |\;\None \Rightarrow d\ \mathsf{end} \\
            \linen{9} & \quad \mathtt{update} \coloneqq \lambda\; \_, (v, r).\; (\Some{v}, r)\\
            \linen{10} & \quad \mathtt{fcnteq} \coloneqq \lambda\; r, c, (\_, r').\; \myifthenelse{r == r'}{c + 1}{c}\\
            \linen{11} & \quad \mathtt{mkdec} \coloneqq \lambda\; c, p.\; \myifthenelse{c > f}{\Some{p}}{\None}
        \end{align*}
    \end{definition}
    
    In the SeqPaxos protocol, the leader takes a single natural number, which is the current round number.
    The replicas take a pair of an optional $\mathds{V}$ value and a natural number.
    The number is the latest round number ever received, and the value is the proposal of that round.
    To initiate the protocol, the leader should be given the input $1$ and the replicas the dummy values $\None$ and $0$.

    SeqPaxos performs three rounds of communication in each iteration.
    In the first round of communication on line $2$, all the replicas send their local value and round number to the leader.
    The leader then finds the largest round number and the proposal associated with the round number.
    On line $3$, the leader computes the proposal of the iteration.
    If the proposal associated with the largest round number received is $\Some{v}$, the proposal is $v$.
    Otherwise, the proposal is set to the default value of that iteration.
    On line $4$, the leader broadcasts the proposal with its round number to all the replicas.
    For each replica, if it receives the message from the leader, it updates its local value and round number to the new local value and round number; otherwise, it keeps its old local value and round number.
    On line $5$, all the replicas send their local value and round number to the leader again.
    This time, the leader counts how many replicas have updated their local value and round number to the latest proposal and round number pair.
    On line $6$, if the number of up-to-date replicas is greater than $f$, the leader decides their proposal as the decision of the iteration; otherwise, no decision is made.
    The leader returns a pair of the decision and an incremented round number while each replica returns its latest value.

    \begin{definition}[SeqPaxos Configurations]
        SeqPaxos requires a configuration $\mathcal{C}$ to have two roles, $\Lrole$ and $\Rrole$.
        $\Lrole$ is the leader role, $\mathsf{Good}_\Lrole = \{l\}$, $b_\Lrole = 0$. 
        We set $f_\Lrole = 1$ to model that the leader may crash.
        $\Rrole$ is the replica role, $\mathsf{Good}_\Rrole = \{r_1, \dots, r_n\}$.
        We also set $b_\Rrole = 0$ and $f_\Rrole = f$ to model the assumption that up to $f$ replicas may crash.
    \end{definition}
    While we do not directly model the crash behavior, we claim the above definition covers all possible cases for SeqPaxos.
    This is because we are only reasoning about finite iterations, and there is no difference between a node that has crashed at some point and a node that has certain messages sent from and to it dropped by $\netsim$.

    The special initial input to SeqPaxos is defined as $\mathtt{init} = \{\Lrole \mapsto [1]; \Rrole \mapsto \vec{x} \}$ where $\vec{x}$ is $n$ copies of $(\None, 0)$.

    We formally state and prove the agreement property for SeqPaxos.
    \begin{theorem}[Agreement]
        If $n > 2f$, 
        \begin{align*}
            & \forall D, i, \{\Lrole \mapsto [(d_i, r_i)] ;\Rrole \mapsto \vec{x}_i\} \in \denote{\iter{\mathtt{SeqPaxos}}{i}}(\{x \mapsto \mathtt{init}\}), \ \  d_i = \Some{D} \Rightarrow & \\
            & \quad \quad \forall j > i, \{\Lrole \mapsto [(d_j, \_)] ;\Rrole \mapsto \_\} \in \denote{\iter{\mathtt{SeqPaxos}}{j - i - 1}}(\{x \mapsto \{\Lrole \mapsto [r_i]; \Rrole \mapsto \vec{x}_i\}\}), & \\
            & \quad \quad d_j = \Some{D} \vee d_j = \None &
        \end{align*}
    \end{theorem}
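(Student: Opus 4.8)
The plan is to follow the same recipe as the Bosco agreement proof: isolate a \emph{univalent condition} $UC_D$ on the state that is threaded between iterations, show that deciding $\Some{D}$ in iteration $i$ forces $UC_D$ on the state handed to iteration $i{+}1$, and show that $UC_D$ is preserved by one run of $\mathtt{SeqPaxos}$ while forcing every decision to be $\Some{D}$ or $\None$; the theorem then follows. Here $UC_D$ is the formal shadow of the classical Synod argument that a chosen value stays locked in. Writing $\rho^{\ast}$ for the round number $r_i{-}1$ that the deciding iteration broadcasts on line 4, I would take $UC_D$ of a state $\{\Lrole \mapsto [\rho]; \Rrole \mapsto \vec{x}\}$ to be the conjunction of: (a) more than $f$ entries of $\vec{x}$ have the form $(\Some{D}, r')$ with $r' \ge \rho^{\ast}$; (b) every entry of $\vec{x}$ whose round number is $\ge \rho^{\ast}$ has value component $\Some{D}$; and (c) every round number occurring in $\vec{x}$ is strictly below the leader round $\rho$. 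I would also prove a small \emph{round-monotonicity} lemma: if a state has leader round $\rho$ and all replica rounds $< \rho$, then any output of $\denote{\mathtt{SeqPaxos}}$ on it has leader round $\rho{+}1$ and all replica rounds $< \rho{+}1$ (the leader increments its round on line 6, and a replica only ever adopts the leader's broadcast round via $\mathtt{update}$ on line 4). Since $\mathtt{init}$ satisfies the hypothesis with $\rho = 1$, iterating this lemma shows that in the run producing iteration $i$ the replica rounds are all $< \rho_i = \rho^{\ast}$ by the time the deciding iteration reaches lines 4--5.

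Using the let-comm normal form of $\mathtt{SeqPaxos}$, I would then walk \emph{backward} through the deciding iteration from the hypothesis $d_i = \Some{D}$. Line 6 forces $\mathtt{cnt} > f$ and the leader's proposal $p = D$. Hence on line 5 more than $f$ good replicas --- there are no Byzantine replicas, so $\netsim_\Rrole$ only drops and permutes --- sent a pair with round number exactly $\rho_i$. By round-monotonicity no such replica could already carry round $\rho_i$, so each of them received the leader's line-4 broadcast and therefore has $\Rrole$-output $(\Some{p}, \rho_i) = (\Some{D}, \rho^{\ast})$. This gives clause (a) for $\vec{x}_i$; clauses (b) and (c) drop out of the same case split together with monotonicity. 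Thus $UC_D$ holds of $\{\Lrole \mapsto [r_i]; \Rrole \mapsto \vec{x}_i\}$, the state fed to iteration $i{+}1$.

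It then remains to prove, by induction on $j - i - 1$, that $UC_D$ is preserved by a single run of $\mathtt{SeqPaxos}$ and that any decision produced from a $UC_D$ state is $\Some{D}$ or $\None$. For this I would walk \emph{forward}. On line 2 the leader hears a list containing the messages of at least $n - f$ of the $n$ replicas ($\netsim_\Rrole$ truncates to length $\ge n - f$); since $n > 2f$ we have $n - f > f$, so this quorum meets the set of $> f$ replicas from clause (a) and the heard list contains some $(\Some{D}, r')$ with $r' \ge \rho^{\ast}$. By clause (c) no heard round exceeds the leader's current round, and by clause (b) the heard pair of maximal round has value $\Some{D}$, so $\mathtt{fmaxr}$ returns $(\Some{D}, \cdot)$, $\mathtt{pickp}$ returns $D$, and line 3 proposes $D$; hence line 6 decides $\Some{D}$ or $\None$. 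For preservation, on line 4 each replica either keeps its old pair or adopts $(\Some{D}, \rho)$ with $\rho \ge r_i > \rho^{\ast}$, so clauses (a) and (b) survive, and clause (c) survives because the leader round is incremented on line 6 while no replica round exceeds $\rho$.

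The step I expect to be the real obstacle is getting $UC_D$ to be simultaneously \emph{forced} by a decision and \emph{inductive} --- concretely, getting clause (c) and the round-monotonicity lemma exactly right. Without them one cannot exclude a replica that is ahead of $\rho^{\ast}$ while carrying a stale non-$D$ value, which would wreck the forward argument that $\mathtt{fmaxr}$ selects $\Some{D}$. The quorum-intersection count against $\netsim_\Rrole$'s truncation bound $n - f$ (the other place $n > 2f$ is used) and the monadic bookkeeping of the power-set semantics are routine by comparison.
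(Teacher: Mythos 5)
Your proposal is correct and follows the paper's strategy exactly: a univalent condition $UC_D$, a backward pass showing that a decision $d_i = \Some{D}$ forces $UC_D$ on the state handed to the next iteration, and a forward induction showing $UC_D$ is preserved and constrains every subsequent decision to $\Some{D}$ or $\None$, with $n > 2f$ entering only through the intersection of the $>f$ up-to-date replicas with the $\ge n-f$ messages that survive $\netsim_\Rrole$'s truncation. The one substantive difference is the formulation of $UC_D$: the paper defines it extensionally as the weakest precondition that line 3 computes $D$ under every network behavior, namely $\forall \ell \in \netsim_\Rrole(\vec{x}),\ \mathsf{fst}(\foldl{\mathtt{fmaxr}}{(\None,0)}{\ell}) = \Some{D}$, whereas you define it intensionally by the three structural clauses (a quorum at round $\ge \rho^{\ast}$, value-uniqueness above $\rho^{\ast}$, and round-boundedness), which together with your round-monotonicity lemma are precisely the intermediate facts the paper establishes (via its Lemma on round numbers) to discharge its own $UC_D$. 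The paper's extensional choice buys a little more modularity --- it notes that because its $UC_D$ is exactly the weakest precondition of line 3, proofs of the preservation step can be reused when the decision rule on lines 5--6 is weakened --- while your explicit clauses make the invariant easier to check directly but bake in one particular witness of why $\mathtt{fmaxr}$ must return $\Some{D}$.
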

    
    \begin{proof}
        We have the following lemma about the inputs to any iteration $i$, which can be proven by induction on $i$ and stepping through the program.
        \begin{lemma}
            \label{lem:basic_properties}
            $\forall i, \{\Lrole \mapsto [(\_, r)] ;\Rrole \mapsto \vec{x}\} \in \denote{\iter{\mathtt{SeqPaxos}}{i}}(\{x \mapsto \mathtt{init}\})$, implies:
            \begin{itemize}
                \item $r=i + 2\ \wedge$
                \item $\forall u, \mathsf{snd}(\vec{x}@u) < i + 2\ \wedge$
                \item $\forall u, 0 < \mathsf{snd}(\vec{x}@u) \Rightarrow \exists v, \mathsf{fst}(\vec{x}@u) = \Some{v}\ \wedge $
                \item $\forall u, w, \mathsf{snd}(\vec{x}@u) = \mathsf{snd}(\vec{x}@w) \Rightarrow \vec{x}@u = \vec{x}@w$.
            \end{itemize}
        \end{lemma}
        We again define the univalent condition $UC_D$ for SeqPaxos as the weakest precondition\footnote{We treat $\mathtt{default}$ as some value that cannot be used outside of the particular round at runtime.} of line $3$ always resulting in value $D$.
        \[ 
            UC_D(\vec{x}) \coloneqq \forall \ell \in \netsim_\Rrole(\vec{x}), \mathsf{fst}(\foldl{\mathtt{fmaxr}}{(\None, 0)}{\ell}) = \Some D
        \]

        For the agreement property, we prove the following:
        \begin{align*}
            & \forall D, i, \{\Lrole \mapsto [(d_i, \_)] ;\Rrole \mapsto \vec{x}_i\} \in \denote{\iter{\mathtt{SeqPaxos}}{i}}(\{x \mapsto \mathtt{init}\}), \\
            & \quad d_i = \Some{D} \Rightarrow UC_D(\vec{x}_i) & (1) \\
            & \forall D, i, \{\Lrole \mapsto \_ ;\Rrole \mapsto \vec{x}_i\} \in \denote{\iter{\mathtt{SeqPaxos}}{i}}(\{x \mapsto \mathtt{init}\}), UC_D(\vec{x}_i) \Rightarrow & \\ 
            & \quad \forall j > i, \{\Lrole \mapsto [(d_j, \_)] ;\Rrole \mapsto \vec{x}_j\} \in \denote{\iter{\mathtt{SeqPaxos}}{j - i - 1}}(\{x \mapsto \{\Lrole \mapsto [r_i]; \Rrole \mapsto \vec{x}_i\}\}), & \\ 
            & \quad \quad (d_j = \Some{D} \vee d_j = \None) \wedge UC_D(\vec{x}_j)\ & (2)
        \end{align*}

        It is worth noting that our UC predicate is weaker than the strongest postcondition of lines $5$ and $6$ deciding $D$, which means the protocol is actually safe to decide in more cases. An optimization for the original Paxos protocol that utilizes this is described in \cite{goldweberBriefAnnouncementSignificance2020}.
        Here, we prove the property with this precise condition, allowing optimizations that weaken the decision condition to be verified compositionally and reuse the proof of (2).

        To prove $(1)$, we work backward to find all possible intermediate cases that may lead to $d_i = \Some{D}$ and then prove they entail $UC_D(\vec{x}_i)$.
        \begin{itemize}
            \item By Lemma \ref{lem:basic_properties}, the round number input for the round that produces $\vec{x}_i$ is $i + 1$.
            \item By the definition of $\mathtt{mkdec}$ on line $11$, we know on line $6$, for the single leader node $l$, $\mathtt{cnt} > f \wedge p = D$.
            \item On line $5$, we have \[ \exists \ell \in \netsim_\Rrole(\vec{x}_i), \foldl{(\mathtt{fcnteq}\ (i + 1))}{0}{\ell} > f\]
            \item By line $4$, Lemma \ref{lem:fcnteq}, and $b_\Rrole = 0$, $\#_{(\Some{D}, i+1)}(\vec{x}_i) > f - b_\Rrole = f$. 
            \item Unfolding the definition of $UC_D$, we need to prove:
            \[ \forall \ell \in \netsim_\Rrole(\vec{x}_i), \mathsf{fst}(\foldl{\mathtt{fmaxr}}{(\None, 0)}{\ell}) = \Some D \]
            By Lemma \ref{lem:fcnteq} again, $\#_{(\Some{D}, i+1)}(\ell) > f - f = 0$.
            So there exists $w$, $w \in \ell = (\Some{D}, i+1)$.
            By Lemma \ref{lem:basic_properties},
            \[\forall u, \mathsf{snd}(\vec{x}_i@u) \le i + 1 \wedge \mathsf{snd}(\vec{x}_i@u) = i + 1 \Rightarrow \vec{x}_i@u = \vec{x}_i@w = (\Some{D}, i + 1)\]
            which means $i + 1$ is the largest round number, and every replica with round number $i + 1$ has value $(\Some{D}, i + 1)$.
            So by the definition of $\mathtt{fmaxr}$,
            \[\foldl{\mathtt{fmaxr}}{(\None, 0)}{\ell} = (\Some{D}, i + 1)\]
        \end{itemize}

        To prove $(2)$, we perform induction on $j$.
        It suffices to prove for a single iteration, and we do a forward pass through the program.
        \begin{itemize}
            \item By Lemma \ref{lem:basic_properties}, the round number output for the round that produces $\vec{x}_i$ is $i + 2$.
            \item For lines $2$ and $3$, by the definition of $UC_D$, we have $p = D$. This implies $\mathtt{dec} = \Some{D} \vee \mathtt{dec} = \None$ on line $6$.
            \item On line $4$, our goal is to prove $UC_D$ holds on the new values of the replicas, $y@\Rrole$.
            By the definition of $\mathtt{update}$ and Lemma \ref{lem:basic_properties}, 
            \[\forall u, y@\Rrole@u = (\Some{D}, i + 2) \vee y@\Rrole@u = \vec{x}_i@u \]
            So each node nondeterministically picks between switching to the new value $(\Some{D}, i + 2)$ and keeping the old value $\vec{x}_i@u$.
            This property is preserved by the $\netsim$ relation, more formally:
            \[
                \forall \ell \in \netsim_\Rrole(y@\Rrole), \exists \ell' \in \netsim_\Rrole(\vec{x}_i), \forall u, \ell@u = (\Some{D}, i) \vee \ell@u = \ell'@u
            \]
            
            For any $\ell \in \netsim_\Rrole(y@\Rrole)$, there are two cases.
            \item If $(\Some{D}, i + 2) \in \ell$, then $\foldl{\mathtt{fmaxr}}{(\None, 0)}{\ell} = (\Some D, i + 2)$.
            \item Otherwise, all nodes whose messages were received did not update their local values. Thus, $\ell = \ell'$ and $\ell \in \netsim_\Rrole(\vec{x}_i)$.
            Because of $UC_D(\vec{x}_i)$, $\mathsf{fst}(\foldl{\mathtt{fmaxr}}{(\None, 0)}{\ell}) = \Some D$.
            Thus, $UC_D(\vec{x}_{i+1}=y@\Rrole)$.
        \end{itemize}
    \end{proof}

    \section{Low-Level Semantics}
    \label{sec:operational}

    In this section, we define \LLL, whose semantics is a labeled transition system that serves as a low-level operational semantics for general distributed systems, and that serves as a target language for compiling \HLL programs.  
    \LLL provides a formal basis for proving the adequacy of the denotational reasoning we propose for \HLL programs.
    
    We first clarify the network and adversary assumptions used \LLL's semantics is modeled after
    and then construct the semantics by composing multiple instances of local labeled transition systems, \LLL nodes and \LLL channels.
    An \LLL node models the behaviors of an individual non-Byzantine node in the system, while 
    an \LLL channel models the influence of the network and Byzantine nodes in a single logical round of communication.   

    \subsection{Network and Adversary Assumptions}
    \label{sec:network}

    We assume any network message contains two parts of information: some message payload $x$, which is a well-typed value, and an auxiliary header, which includes information such as an identifier for the channel this message is for, the sender's identity, the receiver's identity, \emph{etc.}
    The header information is invisible to the distributed program and therefore not modeled in the semantics; it is used only by the runtime system to provide specific guarantees.

    We assume the network does not duplicate messages, which can be enforced by suppressing the duplicates at the receiving end. 
    We also require the messages to be \emph{authenticated}: if a message with the payload $x$ is received by some non-Byzantine receiver $p$, the runtime of $p$ can check that
    the identity of the sender is some node $q$ in the system using the information in the header.
    Furthermore, if $q$ is non-Byzantine, $q$ must have sent the message with payload $x$.
    Cryptographic signatures are a common mechanism used to satisfy these assumptions, but we do not model signatures directly.
    
    For the delivery of messages, we assume the network is asynchronous and reliable.
    Asynchronous means that there is no bound on the delay between the sender node sending a message and the receiver node receiving and processing this message.
    Reliable means that a message sent from a correct sender to a correct receiver will eventually arrive, which is necessary for the system to make progress.
    This eventual arrival can be achieved by resending the messages.

    We assume a powerful adversary that overestimates the capabilities of a realistic one.
    In particular, our adversary has complete control over the set of faulty nodes, has sufficient computational resources, is capable of sending out arbitrary messages, and controls the message delivery schedule.
    
    However, we do not explicitly model the adversary doing any of the following:
    \begin{itemize}
        \item Attacking the liveness of the network, such as in denial-of-service attacks.
        \item Modifying messages sent by non-Byzantine nodes as in man-in-the-middle attacks.
        Our authenticated network assumption excludes the cases where the adversary forges messages that appear to have been sent by a correct node.
        \item Sending ill-formed messages, such as incomplete auxiliary information or an ill-typed value.
        Although a common source of vulnerabilities in reality, this kind of messages can be filtered by a correct parser and dropped.
        In our case, we do not model parsing nor verify its correctness.
        \item Sending multiple messages with the same header but different payloads.  We assume the receiver simply drops all but one message from a given sender.
    \end{itemize}
    
    Under our assumptions, the Byzantine adversary can be modeled as a stateless entity because any series of events that can happen during the execution does not change its capabilities.
    Furthermore, because the only ways the adversary can affect the execution of correct nodes are by sending well-formed messages that contain arbitrary contents and manipulating the delivery schedule, we model the adversary's influence through a special action of the network that creates well-formed messages out of thin air and consider all possible schedules of the network.


    \subsection{\LLL Node Semantics}
    \label{sec:thread}

    Recall that a labeled transition system $\langle S, \Lambda, \rightarrow \rangle$ includes
    a set of states $S$, a set of labels $\Lambda$, and a transition relation $\rightarrow$ over $S \times \Lambda \times S$ that relates a starting state, a label, and a next state.
    A behavior or \emph{trace} of a labeled transition system from a specific initial state is a sequence of labels and states where each element is permitted by $\rightarrow$ from the previous state, given the label.  We write $s \xrightarrow{\ell} t$ for when $(s, \ell, t) \in \rightarrow$.  If $L$ is a sequence of labels, we write 
    $s \xrightarrow{L} t$ to mean if $L$ is empty, then $t = s$, and if $L = \ell::L'$, then there exists a $t'$ such that $s \xrightarrow{\ell} t'$ and $t' \xrightarrow{L'} t$.
    
    The core of an \LLL node is a program represented by the following definition in Rocq, which we will use to represent states in our transition system for nodes.
    \begin{definition}[\LLL Monad for \LLL Node Syntax]
        \begin{equation*}
            \renewcommand{\arraystretch}{1.2}
            \begin{array}{l}
                \mathtt{Inductive}\ \mathsf{\LLL} (T\hastype\mathtt{Type}) \hastype \mathtt{Type} \coloneqq \\
                \quad \mid \mathsf{return}\hastype T \funarrow \mathsf{\LLL}\ T \\
                \quad \mid \mathsf{sendThen} \hastype {\forall}c\hastype\mathsf{Channel}, \mathsf{msg\_t}(c) \funarrow \mathsf{\LLL}\ T \funarrow \mathsf{\LLL}\ T \\
                \quad \mid \mathsf{rcvThen} \hastype \forall c\hastype\mathsf{Channel}, (\mathsf{list}(\mathsf{msg\_t}\ c) \funarrow 
                    \mathsf{\LLL}\ T) \funarrow \mathsf{\LLL}\ T. \\
                    \\
                \mathtt{Fixpoint}\ \mathsf{bind}\{T\ U\}(e\hastype \mathsf{\LLL}\ T) (f\hastype T \funarrow \mathsf{\LLL}\ U) \coloneqq \\
                \quad \mathtt{match}\ e\ \mathtt{with} \\
                \quad \mid \mathsf{return}\ v \Rightarrow f v \\
                \quad \mid \mathsf{sendThen}\ c\ m\ k \Rightarrow\  \mathsf{sendThen}\ c\ m\ (\mathsf{bind}\ k\ f) \\
                \quad \mid \mathsf{rcvThen}\ c\ g\ \Rightarrow\ \mathsf{rcvThen}\ c\ (\lambda\ m.\ \mathsf{bind}\ (g\ m)\ f) \\
                \quad \mathtt{end}. \\ 
                \\
                \mathtt{Definition}\ \mathsf{send}\ c\ m\ \coloneqq\ \mathsf{sendThen}\ c\ m\ (\mathsf{return}\ \mathsf{tt}). \\
                \mathtt{Definition}\ \mathsf{receive}\ c\ (d\hastype\ T)\ (f\hastype\ T \funarrow \mathsf{msg\_t}\ c\funarrow\ T) 
                \coloneqq \mathsf{rcvThen}\ c\ (\lambda m.\ \mathsf{return}\ (\mathsf{foldl}\ f\ d\ m)). \\
                
            \end{array}
        \end{equation*}
    \end{definition}

    $\mathsf{\LLL}\ T$ is a recursively defined type, which describes three kinds of computations: returning a value of type $T$; sending a message to channel $c$, and then continuing with another computation; receiving from a channel $c$ a list of messages, which are fed to a function to produce a continuation. 

    Some auxiliary operations will be needed to build \LLL node programs below.  
    The operation $\mathsf{bind}\ t\ f$ ``appends'' $f$ onto the leaves of \LLL node program $t$.  More properly, the leaves of $t$ must be of the form $\mathsf{return}\ e$, and the $\mathsf{bind}$ replaces these leaves with $f\ e$. The definitions for $\mathsf{send}$ and 
    $\mathsf{receive}$ build corresponding \LLL node program trees with trivial continuations that immediately return.  Note that $\mathsf{bind}$ and $\mathsf{return}$ form a monad.
        
    To give meaning to \LLL node programs, we formulate a labeled transition system $\langle S_t, \Lambda_t, \rightarrow_t \rangle$:

    \begin{definition}[\LLL Node Semantics]
        \label{def:single_node}
        The state, $S_t$ is an \LLL node program of some return type $T$.
        
        $\Lambda_t$ has two kinds of labels: 
        \begin{align*}
            l_t & \Coloneqq \mathsf{send}\ c\ v \mid \mathsf{receive}\ c\ \Vec{v}
        \end{align*}
        
        The transition relation is defined by:  
        \begin{mathpar}
            \infer*[left=(Send)]
            {}
            {\mathsf{sendThen}\ c\ v\ k \xrightarrow{\mathsf{send}\ c\ v} k}
            
            \infer*[left=(Receive)]
            {} 
            {\mathsf{rcvThen}\ c\ f \xrightarrow{\mathsf{receive}\ c\ \Vec{v}} f(\vec{v})}
        \end{mathpar}
        
    \end{definition}

    To obtain the initial state for a given node, we now define a translation from \HLL programs and roles to \LLL node programs, which is a form of \emph{endpoint projection} because \HLL is in choreography style.  
    
    \begin{definition}[Compiling \HLL to \LLL Node Programs]
        \label{def:HLL_to_LLL}
        Given a record type $\tau = \{R_1\hastype\tau_1,\ldots,R_n\hastype\tau_n\}$ we define $\compile{\tau}{R_i} = \tau_i$.  
        When $R$ is not a field of $\tau$, we take $\compile{\tau}{\Rrole}$ to be $\mathsf{unit}$.  
        Given a variable context $\Gamma = [x_1\hastype\tau_1,\ldots,x_n\hastype\tau_n]$ we define $\compile{\Gamma}{R}$ to be 
        $[x_1\hastype\compile{\tau_1}{R},\ldots,x_n\hastype\compile{\tau_n}{R}]$.  Then, 
        given a \HLL program $p$ such that $\tprog{\Delta}{\Gamma}{\mathcal{R}}{p}{\tau}$ and a role $\Rrole \in \mathcal{R}$ and node $i$ in role $R$,
        the compilation of $p$ at role $\Rrole$ and node $i$
        denoted by $\compile{p}{\Rrole,i}$ yields a term $t$ such that $\compile{\Gamma}{\Rrole} \vdash t : \mathsf{\LLL}\compile{\tau}{\Rrole}$ and is defined by:
        \[       
        \begin{array}{rcl}
            \compile{\mathtt{ret}\,\{R_j\mapsto e_j\}}{R,i} & = & \begin{cases}
                                                                    \mathsf{return}\ \compile{e_j}{R,i} \qquad &  R = R_j \\
                                                                    \mathsf{return}\ \mathsf{tt} & \forall j, R \neq R_j
                                                                   \end{cases} \\ \\
            \compile{\letexp{x}{p_1}{p_2}}{R,i} & = & \compile{p_1}{R,i} \bind \lambda x.\compile{p_2}{R,i} \\ \\            
            \compile{\communicate{c}{e_m}{e_d}{e_f}}{R,i} & = & \begin{cases}
                                                            \mathsf{let}\ \_ \leftarrow \mathsf{send}\ c\ \compile{e_m}{R,i}\ \mathsf{in} \\ 
                                                            \quad \mathsf{receive}\ c\ \compile{e_d}{R,i}\ \compile{e_f}{R,i}
                                                                & \doublebackslash R\ \text{sending and receiving} \\
                                                            \mathsf{send}\ c\ \compile{e_m}{R,i}
                                                                & \doublebackslash R\ \text{sending but not receiving} \\
                                                            \mathsf{receive}\ c\ \compile{e_d}{R,i}\ \compile{e_f}{R,i}
                                                                & \doublebackslash R\ \text{receiving but not sending} \\
                                                             \mathsf{return}\ \mathsf{tt}
                                                                & \doublebackslash R\ \text{not sending nor receiving} \\
                                                          \end{cases}
        \end{array}
        \]
    \end{definition}
        The definition assumes a translation for expressions $\compile{e}{R,i}$ which simply removes the $R$ from variables and terms:
        \[
        \begin{array}{rcl}
            \compile{x_R}{R,i} & = & x \\
            \compile{\pure{t}{R}}{R,i} & = & t \\
            \compile{[t_1,\ldots,t_n]}{R,i} & = & t_i \\
            \compile{e_1\ e_2}{R,i} & = & \compile{e_1}{R,i}\ \compile{e_2}{R,i}
        \end{array}
        \]
        
    Finally, assuming a top-level program $p$ is closed, then given a role $R$ and node $i$ for that role, we can form a closed \LLL node program for the initial state of node $i$ by 
    computing $\compile{p}{R,i}$.

    As an example, the \LLL node programs for the two roles in the $\mathtt{SimpleVote}$ program in Figure~\ref{fig:example-hll} are shown below, using the notation $\mathsf{let}\ x \coloneqq e_1\ \mathsf{in}\ e_2$ to represent 
    $\mathsf{bind}\ e_1\ (\lambda\ x.e_2)$:
    \begin{center}
    \begin{tabular}{l|l}
        \Pseudocomment{\LLL node program for the leader:} &  \Pseudocomment{\LLL node program for the (correct) replicas:} \\
        $\lambda p.\;\textsf{let}\ \mathtt{cnt}\; \coloneqq \; \mathsf{receive}\ c\ 0\ (\mathtt{fcnteq}\ p)\ \mathsf{in}$
        & $\lambda x.\; \mathsf{send}\ c\ x$
        \\
        $\quad \mathsf{return}\ (\mathtt{calc\_dec}\ \mathtt{cnt}\ x)$
        & 
    \end{tabular}
    \end{center}
    
    The execution of a single \LLL node program requires an environment that includes other nodes and some model that captures the intuitive behavior for communication mentioned above.  To achieve this, our next step is to build a transition system that models communication channels and then to compose \LLL node and channel transition systems into a global transition system.  
    
    \subsection{Single-use Channel Semantics}
    \label{sec:channel}

    A \emph{channel} is an abstraction used to describe the network. Instead of being modeled as a single entity, the network is logically divided into separate communication channels, each serving only a fixed set of senders and receivers.
    Channels simplify reasoning by separating messages sent for different logical purposes.
    We model network behaviors using multiple single-use channels instead of the more common multi-use channels.
    Each single-use channel models a single logical round of communication, and different channels correspond to different logical communication steps.
    We associate channels with statically known information and give them a dynamic operational semantics.
    The list of parameters that defines a channel is as follows:

    \begin{definition}[\LLL Channel Parameters]
        Assume we are given a role set $\mathcal{R}$, a configuration $\mathcal{C}$ for $\mathcal{R}$, and a channel context $\Delta$. 
        Recall that each channel appears at most once in $\Delta$
        and is associated with a message type $\tau$, as well as a sender role and a receiver role, which can be the same role.
        We use $\mathsf{msg\_t}_{\Delta}(c)$ to represent the type $\tau$ associated with $c$ in $\Delta$.  

        Suppose $c$ is a channel associated with sender role $S$ and receiver role $\Rrole$.  Then 
        $\mathsf{sender}(c)$ denotes the set of non-Byzantine nodes in configuration $\mathcal{C}$ associated with role $S$;  
        $\mathsf{Byz\_sender}(c)$ denotes the set of Byzantine sender nodes in $\mathcal{C}$ associated with role $S$; and
        $\mathsf{receiver}(c)$ denotes the set of non-Byzantine receiver nodes in $\mathcal{C}$ associated with role $\Rrole$.
    \end{definition}

    We also define $\netsim_{\mathsf{\LLL}}$, which models the effect of asynchrony of the network in the same language of $\netsim_\Rrole$.
    \begin{equation*}
        \netsim_{\mathsf{\LLL}} \hastype \mathsf{list}\;\mathsf{msg\_t}(c) \funarrow \powerset{\mathsf{list}\;\mathsf{msg\_t}(c)} \coloneqq \lambda l.\ (\mathsf{perm}\ l) \bind (\lambda l.\ \mathsf{trunc}\, l\ (n_S-f_S))
    \end{equation*}
    
    $\netsim_{\mathsf{\LLL}}$ is a function that maps a list of messages sent to a receiver node to the set of possible lists of messages the receiver could possibly receive.
    Under our network assumptions, it only performs two operations: permutation, which models message reordering in transit, and truncation of the list to a prefix of at least the lower bound on the number of correct sender nodes, i.e., the nodes that are neither Byzantine nor crash.
    The effects of Byzantine nodes are separately captured by the transition relation.  

    Given a channel $c$ with the parameters defined above, we define the semantics of \LLL channel as the labeled transition system $\langle S_c, \Lambda_c, \rightarrow_c \rangle$ below.

    \begin{definition}[\LLL Channel Semantics]
        \label{def:channel}
        A channel state $s \in S_c$ is a tuple of four components:
        \begin{enumerate}
            \item $F_s \subseteq \mathsf{sender}(c)$, the set of non-Byzantine senders who have performed their send action.
            \item $F_r \subseteq \mathsf{receiver}(c)$, the set of non-Byzantine receivers who have done their receive action.
            \item $F_b \hastype \mathsf{receiver}(c) \funarrow \powerset{\mathsf{Byz\_sender}(c)}$, for each non-Byzantine receiver node, a set of Byzantine sender nodes who have sent the receiver a message.
            \item $M \hastype \mathsf{receiver}(c) \funarrow \mathsf{list}\;\mathsf{msg\_t}(c)$, for each receiver a list of message payloads that have been sent to it.
        \end{enumerate}

        $\Lambda_c$ has three kinds of labels:
        \begin{equation*}
            l_c \Coloneqq \mathsf{send}\ s\ v
                \mid \mathsf{byz\_send}\ sb\ r\ v
                \mid \mathsf{receive}\ r\ \Vec{v} 
        \end{equation*}
        where $s \in \mathsf{sender}(c)$, $sb \in \mathsf{Byz\_sender}(c)$, $r \in \mathsf{receiver}(c)$,
        $v \hastype \mathsf{msg\_t}(c)$, and $\Vec{v} \hastype \mathsf{list}\;\mathsf{msg\_t}(c)$.
        
        $\mathsf{send}\ s\ v$ models a non-Byzantine sender node $s$ broadcasting a message $v$ to all the receiver nodes. $\mathsf{receive}\ r\ \Vec{v}$ models a non-Byzantine receiver node $r$ receiving a well-formed list of messages $\Vec{v}$. $\mathsf{byz\_send}\ sb\ r\ v$ models a Byzantine sender node $sb$ sends a single message of content $v$ to a single non-Byzantine receiver node $r$.        
        
        $\rightarrow_c$ has one rule for each kind of label.
        \begin{mathpar}
            \infer*[left=send]
            {s \notin F_s \wedge F_s' = F_s \cup \{s\} \wedge \forall r, M'(r) = M(r) \doubleplus [v] }
            {\langle F_s, F_r, F_b, M \rangle \xrightarrow{\mathsf{send}\ s\ v} \langle F_s', F_r, F_b, M' \rangle} \\
            \infer*[left=byz\_send]
            {\forall r' \ne r, F'_b(r') = F_b(r') \wedge M'(r') = M(r') \\ sb \notin F_b(r) \wedge F'_b(r) = F_b(r) \cup \{sb\} \wedge M'(r) = M(r) \doubleplus [v] }
            {\langle F_s, F_r, F_b, M \rangle \xrightarrow{\mathsf{byz\_send}\ sb\ r\ v} \langle F_s, F_r, F'_b, M' \rangle} \\
            \infer*[left=receive]
            {r \notin F_r \wedge (r \notin \mathsf{sender}(c) \vee r \in F_s) \wedge F_r' = F_r \cup \{r\} \wedge \Vec{v} \in \netsim_{\mathsf{\LLL}}\ M(r) }
            {\langle F_s, F_r, F_b, M \rangle \xrightarrow{\mathsf{receive}\ r\ \Vec{v}} \langle F_s, F_r', F_b, M \rangle} \\
        \end{mathpar}
    \end{definition}

    \LLL channel's state and transitions are mostly about bookkeeping of which nodes have performed their send and receive actions to prevent duplicates:
    each non-Byzantine sender node may only broadcast once;
    each receiver node may only receive once;
    and each Byzantine sender node can only send to each receiver node at most once.
    In addition, if the sender role is the same as the receiver role, which means a sender node is also a receiver node, it must send before it receives.

    In the $\mathsf{receive}$ transition rule, we account for network-reordering and dropping effects with $\netsim_\mathsf{\LLL}$.
    The initial state of the channel is the tuple $\langle \emptyset, \emptyset, \lambda\ \texttt{\_}.\emptyset, \lambda\ \texttt{\_}.[]\rangle$.
    
    \subsection{Composing Labeled Transition Systems}
    \label{sec:composing}

    We build a \emph{global} transition system from a set of local \LLL node and channel transition systems using the following notions of interaction composition:
    \begin{definition}[Interaction Composition]
    \label{def:interaction_composition}
    Let $\mathcal{T}_1 = \langle S_1, \Lambda_1, \rightarrow_1 \rangle$ and $\mathcal{T}_2 = \langle S_2, \Lambda_2, \rightarrow_2 \rangle$ be labeled transition systems, 
    and let $\Lambda \subseteq (\Lambda_1 + \unitt) \times (\Lambda_2 + \unitt)$.
    Then the interaction composition $\mathcal{T}_1 \bowtie_{\Lambda} \mathcal{T}_2$ is defined to be
    $\langle S_1 \times S_2, \Lambda, \rightarrow \rangle$
    where $(s_1, s_2) \xrightarrow{(\ell_1,\ell_2)} (s_1',s_2')$ when
    (a) $\ell_1 = \mathtt{tt}$ and $s_1' = s_1$ or $s_1 \xrightarrow{\ell_1}_1 s_1'$, and 
    (b) $\ell_2 = \mathtt{tt}$ and $s_2' = s_2$ or $s_2 \xrightarrow{\ell_2}_2 s_2'$.
    \end{definition}
    Note that we must specify a subset of permissible labels via $\Lambda$.  
    In other words, the choice of $\Lambda$ delimits the permissible combinations of local labels and thus can be used to model the interactions between the local systems.
    For instance, consider a system composed of a sender and a receiver.
    The only label for the sender is $\mathsf{send}\ x$, for any integer $x$.
    The only label for the receiver is $\mathsf{receive}\ y$, for any integer $y$.
    By only allowing the global labels to be of the form $(\mathsf{send}\ z, \mathsf{receive}\ z)$ for the same integer $z$, we can model synchronous communication between the sender and the receiver.
    
    Given the definition of interaction composition, a special case is when the sub-systems do not interact:
    \begin{definition}[Non-Interacting Composition]
    \label{def:non_interacting_composition}
    Let $\mathcal{T}_1 = \langle S_1, \Lambda_1, \rightarrow_1 \rangle$ and $\mathcal{T}_2 = \langle S_2, \Lambda_2, \rightarrow_2 \rangle$ be labeled transition systems.  Then the non-interacting composition $\mathcal{T}_1 \otimes \mathcal{T}_2$ is given by $\mathcal{T}_1 \bowtie_{\Lambda} \mathcal{T}_2$, where
    the labels in $\Lambda$ are either of the form $(\ell_1,\mathtt{tt})$ or $(\mathtt{tt},\ell_2)$.  
    \end{definition} 
    In other words, the non-interacting composition only allows local transitions.  
    When we have an indexed sequence of transition systems $[\mathcal{T}_1,\ldots,\mathcal{T}_n]$, we write $\otimes_i \mathcal{T}_i$ to 
    represent $\mathcal{T}_1 \otimes \cdots \otimes \mathcal{T}_n$.  If $s = (s_1,\ldots,s_n)$ is a state in a composed system, 
    we write $s@i$ to represent $s_i$.  Similarly, if $\ell$ is a label in the product, then $\ell@i$ represents the label's $i^{th}$ component.  

    \begin{definition}[Global Compilation]
    \label{def:global_compilation}
    Given a well-typed program $\tprog{\Delta}{\emptyset}{\mathcal{R}}{p}{\tau}$, 
    and a configuration $\mathcal{C}$, we define the \emph{global} compilation of $p$ to be:
    \[
        \compile{p}{} = (\otimes_{\Rrole, i \in \Rrole}\compile{p}{\Rrole,i}) \bowtie_{\Lambda} (\otimes_{c \in \Dom{\Delta}} \compile{c}{})
    \]
    where $\compile{c}{}$ for a channel $c$ is the initial channel state of the transition system described in Section~\ref{sec:channel}, 
    and where we write $i \in \Rrole$ to mean $i$ is the node identifier in
    the set of good nodes for role $\Rrole$ specified by configuration $\mathcal{C}$. 
    Additionally, we must specify $\Lambda$, the valid global labels for the composed system.  A
    global label $\ell$ must be of one of the three forms:
    \begin{itemize}
    \item Send:  for some $c$, $p$, and $v$, $\ell@c = \mathsf{send}\ p\ v$, $\ell@p = \mathsf{send}\ c\ v$, and for all $j$, $j \neq p \wedge j \neq c$, $\ell@j = \mathtt{tt}$.
    \item Byzantine:  for some $c$, $p$, $q$, and $v$, $\ell@c = \mathsf{byz\_send}\ p \ q\ v$, and for all $j \neq c$, $\ell@j = \mathtt{tt}$.
    \item Receive:  for some $c$, $p$, and $\Vec{v}$,  $\ell@p = \mathsf{receive}\ c\ \Vec{v}$ and $\ell@c = \mathsf{receive}\ p\ \Vec{v}$, 
        and for all $j$, $j \neq p \wedge j \neq c$, $\ell@j = \mathtt{tt}$.
    \end{itemize}
    \end{definition}
    This choice of global labels assumes that no more than one node and exactly one channel can take a local step in each discrete global step.
    The send label and receive label model the interaction between a single non-Byzantine node and a single channel.
    The two local labels must agree on the payload of the message(s) being sent/received.
    The $\mathsf{byz\_send}$ label models a stand-alone Byzantine action where an arbitrary message is sent from a Byzantine node to a non-Byzantine receiver through a channel.
    The asynchronous communication between two nodes is modeled in this way as two separate synchronous steps between a node and a channel.
    
    The state and the transition relation of this compiled system follow from the construction.  We also have a unique initial state $s_0$ for the global system:  
    \LLL node components map to their initial \LLL node program, while \LLL channel components map to the empty channel state.
    
    We say an \LLL system state $s_f$ is \emph{completed} if all its \LLL node instances have all been reduced to a $\mathsf{return}$.\footnote{Note that this may not be a terminal
    state in the transition system, as Byzantine sends could happen on channels.} We write $\extract{s}$ as the partial function which extracts these values and returns
    them as a record of vectors $\vec{v}$, such that $\vec{v}@i = v$ iff $s@i = \mathsf{return}\ v$ for any node $i$.  
    For an \LLL system instance compiled from a \HLL program, given any completed state $s_f$, we can turn $\extract{s_f}$ into a record
    value of type $\denote{\tau}$, where the different values are broken out by role.  
    We define $\extract{\compile{p}{}}$ to be the set of all such output values of the completed states reachable from the initial state.
    This is the set of possible outputs of the program $p$ in the \LLL (system) semantics.

    \subsubsection{Properties} A critical aspect of this construction is the relation between global behaviors (traces) and local views of behaviors.
    In particular, we make use of two key lemmas: a decomposition lemma that shows all global behaviors are necessarily arrangements of local behaviors, and a composition lemma that shows 
    we can always assemble traces from local systems into a global trace when there are suitable global labels.  
        
    We first define the projection from a sequence of global labels to sequences of local labels.
    \begin{definition}[Global Label Sequence Projections]
        \label{def:global_proj}
        Given a sequence of global labels $L \in (\Lambda_G)^*$ and a local system identifier $i$, the subsequence of global labels related to $i$ is defined as follows:
        \begin{equation*}
            \filterl{i}{L} \coloneqq
            \begin{cases}
                [] & L = [] \\
                \filterl{i}{L'} & L = l \cons L' \wedge l@i=\unitv \\
                l \cons \filterl{i}{L'} & L = l \cons L' \wedge l@i \ne \unitv \\                
            \end{cases}
        \end{equation*}

        The projection of $L$ to a sequence of $i$-local labels is defined as follows:
        \begin{equation*}
            \projl{i}{L} \coloneqq \mapl{(@i)}{\filterl{i}{L}}
        \end{equation*}
        
    \end{definition}
    Assume a system $\mathcal{T} = \mathcal{T}_1 \bowtie_{\Lambda} \mathcal{T}_2$, where $\mathcal{T}_i = \langle S_i, \Lambda_i, \rightarrow_i \rangle$.  

    \begin{lemma}[Decomposition]
        \label{thm:decomposition}
        Given a sequence of labels $L \in \Lambda^*$ and compound states $s$ and $t$ such that $s \xrightarrow{L} t$, 
        then $s@i \xrightarrow{\projl{i}{L}}_i t@i$.
    \end{lemma}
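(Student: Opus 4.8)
The plan is to prove this by induction on the label sequence $L$ (equivalently, on the length of the derivation $s \xrightarrow{L} t$). In the base case $L = []$, the definition of $s \xrightarrow{L} t$ forces $t = s$, hence $t@i = s@i$; and by Definition~\ref{def:global_proj} we have $\filterl{i}{[]} = []$ and therefore $\projl{i}{[]} = []$, so the goal $s@i \xrightarrow{[]}_i s@i$ holds by the same ``empty sequence relates a state to itself'' convention applied to $\rightarrow_i$.

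For the inductive step, write $L = \ell \cons L'$, so there is an intermediate compound state $t'$ with $s \xrightarrow{\ell} t'$ and $t' \xrightarrow{L'} t$. Unfolding Definition~\ref{def:interaction_composition}, the single global step $s \xrightarrow{\ell} t'$ amounts to: for each component $j \in \{1,2\}$, either $\ell@j = \unitv$ and $t'@j = s@j$, or $s@j \xrightarrow{\ell@j}_j t'@j$. Since $\unitv$ is the freshly adjoined unit and is not a label of $\mathcal{T}_1$ or $\mathcal{T}_2$, these two alternatives are mutually exclusive, and which one holds is determined exactly by whether $\ell@j = \unitv$. Applying the induction hypothesis to $t' \xrightarrow{L'} t$ gives $t'@i \xrightarrow{\projl{i}{L'}}_i t@i$. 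Now case-split on $\ell@i$: if $\ell@i = \unitv$, then $t'@i = s@i$ and, by Definition~\ref{def:global_proj}, $\filterl{i}{\ell\cons L'} = \filterl{i}{L'}$, so $\projl{i}{L} = \projl{i}{L'}$ and the induction hypothesis directly yields $s@i \xrightarrow{\projl{i}{L}}_i t@i$; if $\ell@i \ne \unitv$, then $s@i \xrightarrow{\ell@i}_i t'@i$ and $\filterl{i}{\ell\cons L'} = \ell \cons \filterl{i}{L'}$, hence $\projl{i}{L} = \ell@i \cons \projl{i}{L'}$, and prepending the local step $s@i \xrightarrow{\ell@i}_i t'@i$ to the chain $t'@i \xrightarrow{\projl{i}{L'}}_i t@i$ gives exactly $s@i \xrightarrow{\projl{i}{L}}_i t@i$.

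The argument is essentially bookkeeping, so there is no genuinely hard step; the only point requiring care is the observation that the disjunction in Definition~\ref{def:interaction_composition} is deterministic, i.e.\ that $\unitv \notin \Lambda_1 \cup \Lambda_2$ — without this, a $\unitv$-labeled global component could correspond to a real local transition and the alignment between $\filterl{i}{-}$ and the local trace would fail. Finally, the generalization from the binary composition $\mathcal{T}_1 \bowtie_\Lambda \mathcal{T}_2$ to the full $n$-ary composed system of Definition~\ref{def:global_compilation} is immediate, either by running the same induction with $j$ ranging over all node and channel components, or by iterating the binary statement along the associativity of the composition.
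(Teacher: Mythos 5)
Your proof is correct: the induction on $L$ with a case split on whether $\ell@i = \unitv$, using the determinacy of the disjunction in Definition~\ref{def:interaction_composition} (since $\unitv \notin \Lambda_i$), is exactly the standard argument for this bookkeeping lemma, and your handling of $\filterl{i}{\cdot}$ and $\projl{i}{\cdot}$ matches Definition~\ref{def:global_proj}. The paper states this lemma without an explicit proof (deferring to the mechanization), so there is nothing to contrast with; your write-up, including the remark on extending from the binary to the $n$-ary composition, fills that gap correctly.
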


    \begin{lemma}[Composition]
        \label{thm:composition}
        Given component label sequences $L_i \in \Lambda_{i}^*$ and states $s_i, t_i$ such that $s_i \xrightarrow{L_i}_i t_i$, and
        a compound label sequence $L$ such that $\projl{i}{L} = L_i$, then $(s_1,s_2) \xrightarrow{L} (t_1,t_2)$. 
    \end{lemma}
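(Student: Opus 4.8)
The plan is to prove the statement by induction on the compound label sequence $L$, using the definition of the indexed transition relation from Section~\ref{sec:thread}: $s \xrightarrow{[]} t$ iff $t = s$, and $s \xrightarrow{\ell\cons L'} t$ iff there is an intermediate $t'$ with $s \xrightarrow{\ell} t'$ and $t' \xrightarrow{L'} t$. For the base case $L = []$, Definition~\ref{def:global_proj} gives $\projl{i}{[]} = []$, so each premise $s_i \xrightarrow{L_i}_i t_i$ is $s_i \xrightarrow{[]}_i t_i$, forcing $t_i = s_i$; hence $(t_1,t_2) = (s_1,s_2)$ and $(s_1,s_2) \xrightarrow{[]} (s_1,s_2)$ holds.

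For the inductive step write $L = \ell \cons L'$ with $\ell = (\ell_1,\ell_2)$; since $L$ is a compound label sequence, $\ell \in \Lambda$, so it is a legal global label for $\mathcal{T}_1 \bowtie_\Lambda \mathcal{T}_2$. I treat each component $i$ by a case split on whether $\ell@i = \unitv$. If $\ell@i = \unitv$, then by the definition of $\filterl{i}{\cdot}$ and $\projl{i}{\cdot}$ we have $\projl{i}{L} = \projl{i}{L'}$, so $L_i = \projl{i}{L'}$, and I take the intermediate component state $u_i \coloneqq s_i$; clause~(a)/(b) of Definition~\ref{def:interaction_composition} is discharged because a $\unitv$ label permits the component to stay put. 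If $\ell@i \neq \unitv$, then $\projl{i}{L} = (\ell@i)\cons \projl{i}{L'}$, so $L_i$ is nonempty with head $\ell@i$, and unfolding $s_i \xrightarrow{L_i}_i t_i$ produces an intermediate $u_i$ with $s_i \xrightarrow{\ell@i}_i u_i$ and $u_i \xrightarrow{\projl{i}{L'}}_i t_i$. In both cases I obtain a $u_i$ such that $u_i \xrightarrow{\projl{i}{L'}}_i t_i$ and such that the component-$i$ side condition of interaction composition is met; therefore $(s_1,s_2) \xrightarrow{\ell} (u_1,u_2)$. Applying the induction hypothesis to $L'$ with start states $u_i$, end states $t_i$, component sequences $\projl{i}{L'}$, and $L'$ itself as the witnessing compound sequence, gives $(u_1,u_2) \xrightarrow{L'} (t_1,t_2)$; prepending the step on $\ell$ yields $(s_1,s_2) \xrightarrow{L} (t_1,t_2)$.

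I expect no genuinely hard step here; the only thing requiring care is keeping the bookkeeping between $\filterl{i}{\cdot}$/$\projl{i}{\cdot}$ aligned with the case split on $\ell@i = \unitv$, and observing that when $\ell@i \neq \unitv$ the local sequence $L_i$ is necessarily nonempty so the prefix transition can actually be peeled off. Finally, the same argument lifts from the binary $\mathcal{T}_1 \bowtie_\Lambda \mathcal{T}_2$ to the $n$-ary product $\otimes_i \mathcal{T}_i$, and hence to the global system $\compile{p}{}$, by iterating the binary composition; this is the form in which the lemma will be used in Section~\ref{sec:proof} to reassemble local \LLL node and channel traces into a global trace.
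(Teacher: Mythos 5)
Your proof is correct. The paper states Lemma~\ref{thm:composition} without giving a proof, so there is no authorial argument to compare against; your induction on $L$, with the per-component case split on whether $\ell@i = \unitv$ (keeping the component state fixed) or $\ell@i \neq \unitv$ (peeling the head of the nonempty $L_i = (\ell@i)\cons\projl{i}{L'}$ to obtain the intermediate state), is exactly the natural argument the paper leaves implicit, and your closing remark about lifting from the binary $\bowtie_\Lambda$ to the $n$-ary product matches how the lemma is actually deployed in Section~\ref{sec:proof}.
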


    Lemma \ref{thm:decomposition} implies that any property proven for all possible behaviors of local systems is preserved by the global system.
    Intuitively, a local system is most ``free'' when the environment it is interacting with is arbitrary.
    Composing it with other systems provides more information about the environment and restricts its possible behaviors.

    Lemma \ref{thm:composition} shows that any combination of possible local system behaviors can happen in the global system if there exist suitable global labels.
    This can be used to prove a sequence of global labels are \emph{permissible} (\emph{i.e.,} lead from one global state to another) by establishing all its projections 
    are permissible local label sequences.

    \section{Adequacy Proof}
    \label{sec:proof}

    In this section, we sketch the proof that our denotational semantics is adequate for the compiled \LLL system operational semantics.

    For any closed \HLL program $\tprog{\Delta}{\emptyset}{\mathcal{R}}{p}{\tau}$, 
    $\denote{p}$ produces a set of possible outputs of type $\denote{\tau}$.
    The program also compiles to an \LLL system and yields a set of possible outputs of that system 
    $\extract{\compile{p}{}}$.
    \begin{theorem}[Adequacy]
        \label{thm:adequacy}
        $\extract{\compile{p}{}} \subseteq \denote{p}$. 
    \end{theorem}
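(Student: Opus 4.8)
The plan is to prove a slightly more general statement about open programs, of which Theorem~\ref{thm:adequacy} is the closed instance, by induction on the structure of $p$. For a well-typed $\tprog{\Delta}{\Gamma}{\mathcal{R}}{p}{\tau}$ and an \HLL environment $v \hastype \denote{\Gamma}$, consider the \LLL system whose node components start at $\compile{p}{R,i}$ with local environments $\rho_i$ that agree with $v$, meaning $\rho_i(x) = v@x@R@i$; write $\extract{\compile{p}{}}_v$ for its extracted outputs over reachable completed states. I would show $\extract{\compile{p}{}}_v \subseteq \denote{p}v$, so that $\Gamma=\emptyset$ gives the theorem. It is convenient (though not essential) to first put $p$ into let-comm normal form, which changes neither $\denote{p}$ (monadic laws) nor the compiled behavior ($\mathsf{bind}$ of the \LLL monad is associative); then only the $\mathtt{ret}$ and $\letexp{x}{\communicate{c}{e_m}{e_d}{e_f}}{p'}$ cases survive. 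A supporting \emph{expression adequacy} lemma, proven by a routine induction on $e$ using Definitions~\ref{def:denotational_sem} and~\ref{def:HLL_to_LLL} (the application case uses that $\denote{e_1\,e_2}v$ is the pointwise application of the two vectors), states: if $\texp{\Gamma}{R}{e}{\tau}$ and $\rho_i$ agrees with $v$, then evaluating $\compile{e}{R,i}$ under $\rho_i$ yields $(\denote{e}v)@i$. The $\mathtt{ret}$ case is then immediate: the compiled system has no channels, so $s_0$ is the only reachable completed state, and $\extract{s_0} = \{R_j \mapsto \denote{e_j}v\}$, the unique element of $\denote{\mathtt{ret}\{R_j\mapsto e_j\}}v$.

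The second supporting lemma is a \emph{network over-approximation}: if $M$ is any list obtained by interleaving a submultiset of a vector $\vec{m}$ of length $g_S$ with at most $b_S$ arbitrary values, and $\ell \in \netsim_{\LLL}(M)$, then $\ell \in \netsim_S(\vec{m})$. This is pure bookkeeping: $\ell$ is a permuted prefix of $M$ of length at least $n_S - f_S$, so its multiset is contained in that of $\vec{m}$ together with some $\le b_S$ values, and $n_S - f_S \le |\ell| \le g_S + b_S = n_S$; hence in the denotational $\netsim_S$ one pads $\vec{m}$ with exactly those $b_S$ values, permutes so that $\ell$ is a prefix, and truncates to $|\ell|$. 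This discharges the over-approximation claim around Definition~\ref{def:configuration}; the subtleties it must absorb are that the operational channel need not have received every correct sender's message when a node reads, and that a Byzantine node may contact only some receivers — both are covered because truncation may drop messages and because the per-receiver product $\Pi$ in Definition~\ref{def:denotational_sem} lets each receiver's outcome be chosen independently.

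For the $\letexp{x}{\communicate{c}{e_m}{e_d}{e_f}}{p'}$ case, note first that by the \textsc{Let}/\textsc{Comm} typing rules the channel set $\{c\}$ is disjoint from the channel context of $p'$, and by Definition~\ref{def:HLL_to_LLL} each node program has the shape $\compile{\communicate{c}{e_m}{e_d}{e_f}}{R,i} \bind \lambda x.\compile{p'}{R,i}$, so every node's local trace is a $0$-, $1$-, or $2$-step ``$c$-phase'' followed by a ``$p'$-phase'' that only touches later channels. Given a global trace $s_0 \xrightarrow{L} s_f$ to a completed $s_f$, let $L_1$ be the subsequence of $L$ whose unique channel component is $c$, and $L_2$ the rest. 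Reading off the valid local traces with Lemma~\ref{thm:decomposition} and reassembling them with the $n$-ary version of Lemma~\ref{thm:composition} (using that $\projl{i}{L_1}$ is exactly node $i$'s $c$-phase, a prefix of $\projl{i}{L}$, that $\projl{c}{L_1}=\projl{c}{L}$, and that later channels are untouched by $L_1$), one obtains $s_0 \xrightarrow{L_1} s_1 \xrightarrow{L_2} s_f$ where in $s_1$ channel $c$ is in its final state, later channels are untouched, and each receiver $r$ has executed its single $\mathsf{receive}\ c$ step, so that $s_1@r = \compile{p'}{R,r}[x := \foldl{(\compile{e_f}{R,r}[\rho_r])}{(\compile{e_d}{R,r}[\rho_r])}{\Vec{v}_r}]$ for the list $\Vec{v}_r$ it read (the trivial $\mathsf{return}\bind\lambda x.(\cdots)$ reduces away), and $s_1@i = \compile{p'}{R,i}[x := \mathsf{tt}]$ at non-receiver nodes.

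By expression adequacy, the fold function, default, and the messages channel $c$ accumulated for $r$ coincide with $\denote{e_f}v@r$, $\denote{e_d}v@r$, and a submultiset of $\denote{e_m}v$ plus at most $b_S$ Byzantine values; by the network lemma, $\Vec{v}_r \in \netsim_S(\denote{e_m}v)$. Hence the record $y$ that $s_1$ assigns to $x$ (with $\mathsf{tt}$ at non-receiver roles, consistently with the singleton result type $\{R\hastype\tau\}$) lies in $\denote{\communicate{c}{e_m}{e_d}{e_f}}v$, precisely because $\Pi$ in Definition~\ref{def:denotational_sem} ranges over exactly these node-wise combinations. From $s_1$ on, the system is the compilation of $p'$ under the environment $v[x\mapsto y]$, which $L_2$ drives to $s_f$; the induction hypothesis gives $\extract{s_f}\in\denote{p'}(v[x\mapsto y])$, and combining with $y\in\denote{\communicate{c}{e_m}{e_d}{e_f}}v$ through the monadic clause for $\denote{\letexp{x}{\cdot}{\cdot}}$ yields $\extract{s_f}\in\denote{\letexp{x}{\communicate{c}{e_m}{e_d}{e_f}}{p'}}v$. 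The main obstacles are the two nontrivial lemmas combined here: making the phase-decomposition of the global trace precise — i.e.\ cashing out communication-closedness so that $L$ splits cleanly at the channel boundary and each piece is a bona fide trace of a sub-system to which the induction hypothesis applies — and verifying that the operational channel behavior together with stand-alone Byzantine sends is subsumed by the single denotational $\netsim_S$; the rest is bookkeeping.
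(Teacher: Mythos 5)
Your proposal is correct and follows essentially the same route as the paper: both reduce the problem to traces reordered so that all activity on a channel is contiguous (justified via the decomposition/composition lemmas plus the fact that compiled node programs respect the channel ordering in $\Delta$), both peel off one $\communicate{c}{e_m}{e_d}{e_f}$ at a time from the let-comm normal form, and both rest on the same over-approximation lemma $\netsim_{\mathsf{\LLL}}(M(r)) \subseteq \netsim_S(\vec{m})$ together with erasure of the finished channel. The only differences are organizational: the paper performs the full alignment up front as a standalone theorem and routes the induction through an explicit intermediate big-step semantics on closed programs with substitution, whereas you interleave per-channel alignment with the induction and carry an environment for open programs — both are sound packagings of the same argument.
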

    We prove the above result in two steps:
    
    \textit{Step 1:} By definition, any possible output of an \LLL system is witnessed by a trace $L$ such that there exists a completed state $s_f$ which gives the output and $s_0 \xrightarrow{L} s_f$.
    We show that we can reduce the set of traces to be considered in the operational semantics to an ``aligned'' subset that contains the same outputs and whose execution order puts all of the operations on a given channel together.  

    \textit{Step 2:} We prove $\extract{s_f} \in \denote{p}$ assuming $L$ is an aligned trace.

    \subsection{Reduction to Aligned Traces}

    We begin by defining the notion of the \textit{alignment} of a global trace with respect to a channel context.  Intuitively, this captures the idea that the outputs of 
    a given trace are the same as a trace that forces all of the interactions to happen in an order specified by $\Delta$.  
    
    \begin{definition}[Alignment of a Global Trace]
    Given a channel context $\Delta$ and a sequence of global \LLL system labels $L$, the alignment of $L$ with respect to $\Delta$ is given by:
        \begin{equation*}
            \alignl{\Delta}{L} \coloneqq \flatmapl{(\lambda c.\; \filterl{c}{L})}{\Dom{\Delta}}
        \end{equation*}      
    \end{definition}
    That is, $\alignl{\Delta}{L}$ takes a list of labels, and for each channel $c$, projects out the sub-list of labels corresponding to non-empty transitions for $c$, and then concatenates those lists back together in the order that the channels are listed in $\Delta$.  Note that for every global label, there is a unique channel that transitions.  Thus, $\alignl{\Delta}{L}$ produces a permutation of the labels in $L$.

    For example, consider a system with two good nodes 
    $n_1$ and $n_2$ with $n_1$ sending messages to $n_2$ through channel $c_1$ followed by $c_2$, and a Byzantine node $b$ sending a message on $c_1$.  
    One possible sequence of actions we could see is the following
    \[
    [n_1(\mathsf{send}\ c_1\ v_1) ; n_1(\mathsf{send}\ c_2\ v_2) ; n_2(\mathsf{receive}\ c_1\ \Vec{v_1}); n_2(\mathsf{receive}\ c_2\ \Vec{v_2}) ; b(\mathsf{byz\_send}\ c_1\ w)]
    \]
    where we only show the relevant node-portions of the labels.  
    After aligning the trace, we will have:
    \[
    [n_1(\mathsf{send}\ c_1\ v_1) ; n_2(\mathsf{receive}\ c_1\ \Vec{v_1}) ; b(\mathsf{byz\_send}\ c_1\ w); n_1(\mathsf{send}\ c_2\ v_2) ; n_2(\mathsf{receive}\ c_2\ \Vec{v_2})]
    \]
    
    So that all of the $c_1$ actions are performed before the $c_2$ actions, but otherwise the relative order of transitions is preserved.    

    \begin{theorem}[Aligned Trace Permissibility]
        \label{thm:aligned_permissibility}
        Let $s_0$ be the initial state of an \LLL system built from a well-typed \HLL program $p$ with the channel context $\Delta$ and suppose
        $s_0 \xrightarrow{L} s$.  Then $s_0 \xrightarrow{\alignl{\Delta}{L}} s$.  
    \end{theorem}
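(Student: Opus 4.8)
The plan is to reduce the statement to a claim about the projections of $\alignl{\Delta}{L}$ onto each \emph{individual} \LLL node and \LLL channel component, and then run the projections back through the composition machinery. Recall $\compile{p}{} = (\otimes_{\Rrole,i}\compile{p}{\Rrole,i}) \bowtie_{\Lambda}(\otimes_{c}\compile{c}{})$, so by iterating Lemma~\ref{thm:decomposition} down this composition tree, the hypothesis $s_0 \xrightarrow{L} s$ yields $s_0@j \xrightarrow{\projl{j}{L}}_j s@j$ for every leaf component $j$ (each node program $\compile{p}{\Rrole,i}$ and each channel $\compile{c}{}$). The core of the proof is then to establish, for every such $j$,
\[
\projl{j}{\alignl{\Delta}{L}} = \projl{j}{L}.
\]
Granting this, the same local runs witness $s_0@j \xrightarrow{\projl{j}{\alignl{\Delta}{L}}}_j s@j$, and since $\alignl{\Delta}{L}$ is a permutation of the labels of $L$ (each global label transitions a unique channel) it is again a well-formed compound label sequence; iterating Lemma~\ref{thm:composition} back up the composition tree then produces $s_0 \xrightarrow{\alignl{\Delta}{L}} s$, exactly the goal. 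It is essential here to descend all the way to single nodes: alignment interleaves the send/receive actions belonging to different channels differently, so the projection onto the whole node-tensor $\otimes_{\Rrole,i}\compile{p}{\Rrole,i}$ is in general \emph{not} preserved, even though each per-node projection is.

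For a \textbf{channel} component $c$, the equality is almost immediate. By definition $\alignl{\Delta}{L} = \flatmapl{(\lambda c'.\,\filterl{c'}{L})}{\Dom{\Delta}}$, and every global label transitions exactly one channel; hence for $c' \neq c$ no label of $\filterl{c'}{L}$ touches $c$, while every label of $\filterl{c}{L}$ does. Therefore $\filterl{c}{\alignl{\Delta}{L}}$ is literally $\filterl{c}{L}$ (same list, same order), and applying $\mapl{(@c)}{-}$ gives $\projl{c}{\alignl{\Delta}{L}} = \projl{c}{L}$.

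For a \textbf{node} component $n = \compile{p}{\Rrole,i}$, the equality rests on a structural lemma: \emph{in any run of a compiled node program, the emitted labels occur grouped by channel, in the order the channels appear in $\Delta$} (this is the ``communication-closed'' shape of the compilation). I would prove it by induction on $p$ in let-comm normal form, using that the \textsc{Let}/\textsc{Comm} typing rules accumulate the channel context by in-order append ($\Delta = [c]\doubleplus\Delta'$ for a head communication on $c$, with $c$ then preceding every channel of $\Delta'$), together with a sub-lemma that any run of $\mathsf{bind}\ e\ f$ factors as a run of $e$ reaching some $\mathsf{return}\ v$ followed by a run of $f\ v$. Given this, $\projl{n}{L}$ already has the form (all of $n$'s labels on its first used channel) $\doubleplus$ (all on its second used channel) $\doubleplus\cdots$ in $\Dom{\Delta}$-order; and alignment deposits $n$'s labels into the per-channel blocks exactly in that order while preserving their relative order within each block, so $\projl{n}{\alignl{\Delta}{L}}$ is the same list. (Byzantine-send labels touch no node, so they never appear in $\projl{n}{-}$ and can be ignored.)

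The main obstacle is this last structural lemma, and underneath it the bookkeeping about runs of $\mathsf{bind}$: making precise that the compilation of a let-comm normal form program is a right-nested sequence of $\mathsf{bind}$s whose $k$-th segment emits only channel-$c_k$ labels, and that traces distribute over $\mathsf{bind}$ accordingly. Everything else — the channel case, and the two passes of Decomposition and Composition — is routine given the lemmas already in hand.
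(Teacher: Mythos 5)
Your proposal is correct and follows essentially the same route as the paper: apply Decomposition to get permissible local runs, show $\projl{j}{\alignl{\Delta}{L}} = \projl{j}{L}$ for each individual channel (immediate, since other channels' blocks filter to $[]$) and each individual node (via a structural lemma that a compiled node's emitted labels are already grouped by channel in $\Delta$-order), then reassemble with Composition. The paper packages your node-side structural lemma as an explicit relation $\orderrelated_\Rrole$ between label sequences (and node programs) and $\Delta$, proved by induction on the typing derivation rather than on the let-comm normal form, but the content is the same.
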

    
    \begin{proof}
        We use the composition lemma (Lemma~\ref{thm:composition}) to prove $\alignl{\Delta}{L}$ is a permissible trace of the system.
        This requires us to prove that any local projection of $\alignl{\Delta}{L}$ to a local system is a permissible trace of that local system.
        By the decomposition lemma (Lemma~\ref{thm:decomposition}), we know any projections of the given trace $L$ are permissible.
        So it suffices to prove that any local projection of $\alignl{\Delta}{L}$ is equivalent to that of $L$.
        
        In an \LLL system, we have two kinds of local systems: channels and nodes.
        Recall that $\alignl{\Delta}{L}$ reorders the labels in the channel order captured by $\Delta$, 
        so it preserves the projections to the channels naturally.
        More formally, for any channel $c$ in the system, we have:
        \begin{align*}
        \projl{c}{\alignl{\Delta}{L}} &= \mapl{(@c)}{(\filterl{c}{\flatmapl{(\lambda c'. \filterl{c'}{L})}{\Dom{\Delta}}})}
                              && \doublebackslash \text{unfold definitions} \\
                              &= \mapl{(@c)}{({\flatmapl{(\lambda c'. \filterl{c}{\filterl{c'}{L}})}{\Dom{\Delta}}})}
                              && \doublebackslash  \text{list properties} \\
                              &= \mapl{(@c)}{(\filterl{c}{L})} && \doublebackslash \text{other channels give []} \\
                              &= \projl{c}{L} && \doublebackslash \text{by definition}
        \end{align*}
        
        For nodes, 
        we begin by defining a relation ($\orderrelated_\Rrole$) that holds when a sequence of transition labels respects the
        order in $\Delta$ according to a given role $\Rrole$:
        \begin{align*}
            [] & \ \orderrelated_\Rrole\  \Delta \\
            (\mathsf{send}\ c\ v)::(\mathsf{receive}\ c\ \Vec{v})::L & \ \orderrelated_R\  (c: (\Rrole, \Rrole, \tau))::\Delta\ \mathsf{when}\ L \ \orderrelated_\Rrole\  \Delta \\
            (\mathsf{send}\ c\ v)::L & \ \orderrelated_R\  (c: (\Rrole,S,\tau))::\Delta \ \mathsf{when}\ L \ \orderrelated_\Rrole\  \Delta \\
            (\mathsf{receive}\ c\ \Vec{v})::L & \ \orderrelated_\Rrole\  (c:(S,\Rrole,\tau))::\Delta \ \mathsf{when}\ L \ \orderrelated_\Rrole\ \Delta \wedge \Rrole \neq S \\
            L & \ \orderrelated_\Rrole\  (c':(S_1,S_2,\tau))::\Delta \ \mathsf{when}\ L \ \orderrelated_\Rrole\ \Delta \wedge \Rrole \neq S_1 \wedge R \neq S_2
        \end{align*}
        We then extend this relation to describe when an \LLL node program respects $\Delta$.
        Our intuition is that all of the possible traces the program can generate should respect the ordering in $\Delta$:
        \begin{align*}
            \mathsf{return}\ v & \ \orderrelated_\Rrole\  \Delta \ \mathsf{when}\ \Rrole \notin \Delta \\
            \mathsf{sendThen}\ c\ v\ (\mathsf{rcvThen}\ c\ k) & \ \orderrelated_\Rrole\  (c:(\Rrole,\Rrole,\tau))::\Delta\ \mathsf{when}\ \forall \Vec{v}, k(\Vec{v}) \ \orderrelated_\Rrole\  \Delta \\
            \mathsf{sendThen}\ c\ v\ k & \ \orderrelated_\Rrole\  (c:(\Rrole,S,\tau))::\Delta\ \mathsf{when}\ k \ \orderrelated_\Rrole\  \Delta \wedge \Rrole \neq S \\
            \mathsf{rcvThen}\ c\ k & \ \orderrelated_\Rrole\  (c:(S,\Rrole,\tau))::\Delta\ \mathsf{when}\ \forall \Vec{v}, k(\vec{v}) \ \orderrelated_\Rrole\  \Delta\\
            t &\ \orderrelated_\Rrole\  (c'\hastype(S_1,S_2,\tau))::\Delta \ \mathsf{when}\ t \ \orderrelated_\Rrole\ \Delta \wedge \Rrole \neq S_1 \wedge \Rrole \neq S_2
        \end{align*}

        \begin{lemma}[Compilation Respects Channel Ordering]
            \label{lem:order_preservation}
            Suppose $\tprog{\Delta}{\emptyset}{\mathcal{R}}{p}{\tau}$, and let $\Rrole \in \mathcal{R}, i \in \Rrole$.  
            Let $L$ be a sequence of \LLL node labels and $t$ an \LLL node program such that $\compile{P}{\Rrole,i} \xrightarrow{L} t$.  Then $L \ \orderrelated_\Rrole\  \Delta.$
        \end{lemma}
     
        \begin{proof}
            It is easy to see by induction on the typing derivation for $p$ that $\compile{p}{\Rrole,i} \ \orderrelated_\Rrole\  \Delta$.
            We argue that for any \LLL node program $t$ such that $t \ \orderrelated_\Rrole\  \Delta$, that if $t \xrightarrow{L} t'$, then $L \ \orderrelated_\Rrole\  \Delta$, and
            furthermore, there exists a suffix of $\Delta$, $\Delta'$ such that $t' \ \orderrelated_\Rrole\  \Delta'$.
            The argument proceeds by induction on the length of $L$ and then via case analysis on the structure of $t$.  
        \end{proof}

    Now for any role $\Rrole$ and non-Byzantine node $i$ in that role, let $L_i = \projl{i}{L}$.  By Lemma~\ref{lem:order_preservation}, we know that $L_i \ \orderrelated_\Rrole\  \Delta$ 
    and we need to show $L_i = \projl{i}{\alignl{\Delta}{L}}$.  The proof proceeds by induction on the derivation of $L_i \ \orderrelated_\Rrole\  \Delta$.  
    \end{proof}

    \subsection{Adequacy of Aligned Traces}

    We now prove Theorem ~\ref{thm:adequacy} by showing that the output of any completed and aligned trace belongs to the set of outputs given by the denotational semantics. 

    Formally, we show for any closed \HLL program $P$, well-typed with $\tprog{\Delta}{\Gamma}{\mathcal{R}}{P}{\tau}$,
    \[
        \extract{\compile{P}{}} \subseteq \denote{P}
    \]

    By definition of $\extract{}$, for any output $o \in \extract{\compile{P}{}}$, there exists a trace $l$ and a state $F$ such that $F$ is a complete state, $\extract{F} = o$, and $\compile{P}{} \xrightarrow{l} F$.

    By the alignment theorem, we can assume $l = \alignl{\Delta}{l}$ without loss of generality.

    Thus, assume $\Delta = [c_1, c_2, \dots, c_n]$, $l = l_1 \doubleplus l_2 \doubleplus \dots l_n$, where $l_i$ contains only symbols associated with channel $c_i$.

    Additionally, we can assume $P$ is in the let-normal form:
    \begin{align*}
        P = \ & \letexp{x_1}{\communicate{c_1}{m_1}{d_1}{f_1}}{} \\
              & \letexp{x_2}{\communicate{c_2}{m_2}{d_2}{f_2}}{} \\
              & \dots \\
              & \letexp{x_n}{\communicate{c_n}{m_n}{d_n}{f_n}}{} \\
              & \mathtt{ret}\ \{ R_i \mapsto e_i \}
    \end{align*}

    We now define a big-step operational semantics for any closed, well-typed, and normalized \HLL program $P$ with configuration $\mathcal{C}$.
    \begin{mathpar}
        \infer*
        { \Delta(c) = (S, R, \tau_m) \quad  L = [l_1, l_2, \dots, l_{n_R - b_R}] \\ \forall i, l_i \in \netsim_S(\denote{m}) \quad y = \{ R \mapsto \mathsf{map3} \ \mathsf{foldl}\ \denote{f}\ \denote{d}\ L \}}
        { \letexp{x}{\communicate{c}{m}{d}{f}}{P'} \xrightarrow{L} P'[y/x]}
    \end{mathpar}

    We use $\extract{\langle P \rangle}$ to denote the set of possible outputs in this big-step semantics, defined as follows:
    \begin{align*}
        \extract{\langle \mathtt{ret}\ \{ R_i \mapsto e_i \} \rangle} & = \mathsf{singleton} \{R_i \mapsto \denote{e_i} \} \\ 
        \extract{\langle \letexp{x}{\communicate{c}{m}{d}{f}}{P'_x} \rangle} & = \bigcup_{P \xrightarrow{L} P'} \extract{\langle P' \rangle} 
    \end{align*}

    This splits the proof of the adequacy theorem into two parts $ \extract{\compile{P}{}} \subseteq \extract{\langle P \rangle} $ and $\extract{\langle P \rangle} \subseteq \denote{P} $.

    \subsubsection{Big-step to Denotational}

    For $\extract{\langle P \rangle} \subseteq \denote{P} $, we perform an induction on the structure of $P$.

    \begin{proof}
        Base case: when $P = \mathsf{ret} \{ R_i \mapsto e_i \}$, $\extract{\langle P \rangle} =\mathsf{singleton} \{R_i \mapsto \denote{e_i} \} = \denote{P}$.

        Inductive case: when $P = \letexp{x}{\communicate{c}{m}{d}{f}}{P'_x}$. By definition of the big-step semantics, $\extract{\langle P \rangle} = \bigcup_{P \xrightarrow{L} P'} \extract{\langle P' \rangle}$. It suffices to prove for any $L$ and $P'$ such that $P \xrightarrow{L} P'$, $\extract{\langle P' \rangle} \subseteq \denote{P}$.
        
        By the induction hypothesis, we know $\extract{\langle P' \rangle} \subseteq \denote{P'}$. It is enough to prove $\denote{P'} \subseteq \denote{P}$.

        This follows from the definition of the big-step semantics and the denotational semantics.
    \end{proof}

    \subsubsection{\LLL to Big-step}

    To move from \LLL traces to Big-step traces, we first need a well-formedness property of \LLL traces.

    \begin{definition}[Finished channel]
        Recall that a state of an \LLL channel $c$ is a tuple $\langle F_s, F_r, F_b, M \rangle$, where $F_s$ is the set of non-Byzantine senders and $F_r$ is the set of non-Byzantine receivers.

        A state is \emph{finished}, if and only if $F_s = \mathsf{sender}(c)$ and $F_r = \mathsf{receiver}(c)$. This means each non-Byzantine sender has performed their send operation, and each non-Byzantine receiver has performed their receive operation.
    \end{definition}

    The first lemma we need here is that all channels are finished in a complete \LLL trace.

    \textbf{Proof Sketch:} In a complete trace, each node must be complete. Because the compilation guarantees each node respects the channel context, $\Delta$, for each channel, each of the non-Byzantine senders and receivers must have performed their send and receive actions on this channel. Thus, every channel is finished.

    For convenience, we extend the definition of \LLL channel state with two extra bookkeeping states. $M_s \hastype \mathsf{list}\ \tau_m$ records each message sent by each non-Byzantine sender node. $M_r \hastype F_r \funarrow \mathsf{list}\ \tau_m$ records the list of messages received by each non-Byzantine receiver node.
    We modify the transitions accordingly to put information into those two pieces of the states.

    Importantly, with the extra information, we can extract a big-step label $L = [l_1, l_2, \dots, l_{n_R - b_R}]$ from a finished channel state by having $l_i = M^f_r(r_i)$, where $M^f_r$ is the $M_r$ of the finished state and $r_i$ is a specific node of the receiver role.

    So the second lemma we need is that this always gives a valid $L$ such that $\forall i, l_i \in \netsim_S(M^f_s)$. $M^f_s$ is the $M_s$ of the finished state and contains the messages sent by all non-Byzantine sender nodes.

    \textbf{Proof Sketch:} We do an induction over the \LLL channel trace. By definition, each received list of messages satisfies $\netsim_{\LLL}$ with the list of messages that have been sent to the receiver at the time of the receive action. We need to prove that each received list of messages satisfies $\netsim_S(M^f_s)$. This can be proven with the invariant that $M(r)$ is always a subset of $\mathsf{add\_any}\, M^f_s\, b_R$ and thus $\netsim_{\LLL}(M(r)) \subseteq \netsim_S(M^f_s)$.

    A third lemma is needed to tie the changes in the node state to the actions that happened in the channel.
    More specifically, a node that sends a message will always send the message described by its program and continue with a unit; a node that receives a list of messages will continue with the result of folding its handler over the list of received messages.
    This lemma can be proven following the definitions of the \LLL transitions.

    The last lemma allows us to erase finished channels from the \LLL state. For any \LLL trace $S \xrightarrow{l} F$ and $S$ contains a finished channel state $S@c$, then $S' \xrightarrow{l'} F'$, where $S'$ is $S$ except $S@c$, $F'$ is $F$ except $F@c$, and $l'$ is $l$ without labels associated with $c$.
    This lemma holds because the only transitions in $l$ associated with $c$ are Byzantine send operations on channel $c$ due to $S@c$ being in a finished state and these operations can only affect the state of $c$.
    
    To prove $ \extract{\compile{P}{}} \subseteq \extract{\langle P \rangle} $, we perform an induction on the structure of $P$.

    \begin{proof}
        Base case: when $P = \mathsf{ret} \{ R_i \mapsto e_i \}$, by the definition of the compilation rules, both sides are the singleton set of $ \{R_i \mapsto \denote{e_i} \} $.

        Inductive case: when $P = \letexp{x}{\communicate{c}{m}{d}{f}}{P'_x}$. It suffices to show, for any $\Delta${-}aligned \LLL trace $\compile{P}{} \xrightarrow{l} F$, $\extract{F} \in \extract{\langle P \rangle}$.

        We consider a block of labels of the same channel at a time.
        Because $l$ is $\Delta${-}aligned, it must start with a block of labels associated with $c$, so $l = l_c \doubleplus l'$ and $\compile{P}{} \xrightarrow{l_c} S \xrightarrow{l'} F$.
        By the first lemma above, because $F$ is a completed state, it contains a finished channel $c$ state $F@c$.
        By the definition of alignment, $l'$ contains no labels associated with $c$, so $S@c = F@c$ and $S@c$ is a finished channel state.
        By the second lemma above, we can extract a valid big-step label $L$ from $S@c$.
        This gives us a new \HLL program $P'$ through $P \xrightarrow{L} P'$.
        By the third lemma above, the node states in $S$ match those of $\compile{P'}{}$. They only differ because $P'$ does not have channel $c$.
        And by the fourth lemma, there exists $F'$, such that $\compile{P'}{} \xrightarrow{l'} F'$.
        We know $\extract{F'} = \extract{F}$ because the only difference between $F$ and $F'$ is that $F$ includes channel $c$'s state and $F'$ does not, and the extracted output does not depend on the channel states.
        By the induction hypothesis, $\extract{F'} \in \extract{\langle P' \rangle}$.
        By the definition of the big-step semantics, $\extract{\langle P' \rangle} \subseteq \extract{\langle P \rangle}$.
        So we conclude $\extract{F} \in \extract{\langle P \rangle}$.
    \end{proof}

    \section{Related Work}

    Despite our goal being to explore new directions for formal reasoning of distributed systems rather than pushing the state-of-the-art of what can be formally verified, our work is closely related to formal verification frameworks of distributed systems.

    \textit{Reducing Asynchrony to Synchrony.}
    The line of work on the Heard-Of model ~\cite{charron-bostHeardOfModelComputing2009, charron-bostFormalVerificationConsensus2011, dragoiPSyncPartiallySynchronous2016a, damianCommunicationClosedAsynchronousProtocols2019} also observes that certain asynchronous protocols can be verified by reasoning about their synchronous counterparts. 
    \cite{charron-bostFormalVerificationConsensus2011} formalizes the model in Isabelle/HOL.
    PSync~\cite{dragoiPSyncPartiallySynchronous2016a} is embedded in Scala and provides a high-level, synchronous language for describing crash fault-tolerant systems, which can be executed on partially asynchronous networks with a runtime. 
    \cite{damianCommunicationClosedAsynchronousProtocols2019} proposes and formalizes the concept of \emph{communication-closed} protocols, in which communication happens in rounds and messages are either received in that round or not at all. It then generalizes the asynchrony to synchrony reduction to such protocols.
    
    While we offer a functional semantics for compositional theorem-proving in Rocq, the Heard-Of model is mostly adopted for automated verification of manually annotated imperative programs.
    Furthermore, as a language-based approach, well-typed \HLL programs are adequate for reasoning by construction, while in ~\cite{damianCommunicationClosedAsynchronousProtocols2019}, the communication-closed restrictions are enforced by manually annotating C programs and then checked by a stand-alone checker tool.

    \textit{Modeling and Reasoning about Byzantine Failures.}
    We model both crash failures and Byzantine failures through the unified $\netsim$ abstraction.
    Velisarios ~\cite{rahliVelisariosByzantineFaultTolerant2018} and its successor, Asphalion ~\cite{vukoticAsphalionTrustworthyShielding2019}, are verification frameworks built in Rocq that also support reasoning about Byzantine behaviors.
    Using a state machine model and the logic of events ~\cite{bickfordLogicEventsFramework2012}, the protocols need to be reasoned about at a relatively low level similar to that of \LLL.
    However, the frameworks provide epistemic models for the Byzantine adversary and built-in primitives for cryptographic signatures.
    The authors used the framework to verify and produce an executable artifact of the PBFT protocol ~\cite{castroPracticalByzantineFault1999}.

    \textit{Layered Refinements.}
    An enabling technique for many existing works that use trace-based semantics is \emph{layered refinement}. 
    Layered refinements have been shown to be effective at reducing the complexity of those proofs.
    LiDODAG~\cite{qiuLiDODAGFrameworkVerifying2025} and its predecessors~\cite {qiuLiDOLinearizableByzantine2024, honoreMuchADOFailures2021, honoreAdoreAtomicDistributed2022} are Rocq-based verification frameworks for distributed systems that provide several state machine-based trace semantics at different abstraction levels.
    For instance, the top-level trace models the runtime behavior operationally as a non-deterministically growing tree, while at lower levels, the system state contains more details, such as program counters for each node and network messages.
    The authors verified a handful of consensus protocols with more advanced features, such as DAG-based consensus and dynamic reconfiguration, using their layered semantics by modeling it at those different levels and proving refinement relations between them.
    Our compilation from \HLL to \LLL has a similar flavor to that of layer refinement, with two differences:
    (1) The relationship between \HLL and \LLL is proven once and for all, instead of for every verification instance; and (2) \HLL's semantics is functional and can be decomposed along its syntax, while layered refinement can only move from one state transition system to another.
    
    \textit{Separation Logic.}
    Distributed systems that assume more benign execution environments can adopt off-the-shelf techniques such as separation logic for reasoning and provide richer node-local semantics.
    Aneris~\cite{krogh-jespersenAnerisMechanisedLogic2020} is a Rocq framework that utilizes the concurrent separation logic framework Iris~\cite{jungIrisMonoidsInvariants2015} 
    to provide local reasoning for the nodes involved. It supports higher-order store and network sockets and has been used to verify a load balancer with concurrent local programs.
    Disel~\cite{sergeyProgrammingProvingDistributed2017} is also a separation logic-based framework in Rocq and supports compositional reasoning about interactions between an abstract core distributed system and multiple clients.
    Grove~\cite{sharmaGroveSeparationLogicLibrary2023} is another concurrent separation logic library in Rocq. It supports many system features, including time-based leases, reconfiguration, crash recovery, and thread-level concurrency.

    \textit{Decidable Logics.}
    Another approach that eases the proof burden is to restrict the implementation and the specification such that the verification problem falls under some decidable logic and can be solved automatically, as shown in Ivy~\cite{taubeModularityDecidabilityDeductive2018}, TIP~\cite{berkovitsVerificationThresholdBasedDistributed2019}, and also adopted in Verus~\cite{lattuadaVerusPracticalFoundation2024}.
    The design of these languages explores a direction different from ours, where a high degree of automation can be achieved through SMT or specialized solvers.
    
    \bibliography{main}
    
\end{document}